% This is samplepaper.tex, a sample chapter demonstrating the
% LLNCS macro package for Springer Computer Science proceedings;
% Version 2.21 of 2022/01/12
%
\documentclass[runningheads]{llncs}
\usepackage[T1]{fontenc}
% T1 fonts will be used to generate the final print and online PDFs,
% so please use T1 fonts in your manuscript whenever possible.
% Other font encondings may result in incorrect characters.
%
\usepackage{graphicx}
% Used for displaying a sample figure. If possible, figure files should
% be included in EPS format.
%
% If you use the hyperref package, please uncomment the following two lines
% to display URLs in blue roman font according to Springer's eBook style:
%\usepackage{luacolor}
\usepackage{color}
%\renewcommand\UrlFont{\color{blue}\rmfamily}
%\urlstyle{rm}
%

\usepackage[dvipsnames]{xcolor}

\usepackage{amsmath,amssymb,amsfonts,amsthm}
\usepackage{hyperref}
\usepackage{multicol}
\usepackage{array}    % for better column spacing
\usepackage{tabularx} 
\usepackage[linesnumbered,ruled,vlined,algo2e]{algorithm2e}

\usepackage{cleveref}
\usepackage{xcolor}
\usepackage{enumitem}
\usepackage{thmtools}
\usepackage{multirow}

\usepackage{comment}
\usepackage{tikz}
\usetikzlibrary{decorations.pathreplacing}
\usetikzlibrary{fit, backgrounds}
\usetikzlibrary{graphs, arrows.meta, decorations.pathmorphing, calc}
\usepackage{thm-restate}

\usepackage{sidecap}
\theoremstyle{plain}

\begin{document}
\title{Monotone Decontamination of Arbitrary Dynamic Graphs with Mobile Agents\thanks{Eligible for best student paper award.}}

\titlerunning{Monotone Decontamination of Arbitrary Dynamic Graphs}

\author{Rajashree Bar\inst{1} \and Daibik Barik\inst{2} \and Adri Bhattacharya\inst{1} \and Partha Sarathi Mandal\inst{1}}

\authorrunning{Bar et al.}

\institute{Indian Institute of Technology Guwahati, Guwahati, India\\ \and Indian Statistical Institute, Bangalore, India}

\maketitle              
\begin{abstract}

Network decontamination is a well-known problem, in which the aim of the mobile agents should be to decontaminate the network (i.e., both nodes and edges). This problem comes with an added constraint, i.e., of \emph{monotonicity}, in which whenever a node or an edge is decontaminated, it must not get recontaminated. Hence, the name comes \emph{monotone decontamination}. This problem has been relatively explored in static graphs, but nothing is known yet in dynamic graphs. We, in this paper, study the \emph{monotone decontamination} problem in arbitrary dynamic graphs. We designed two models of dynamicity, based on the time within which a disappeared edge must reappear. In each of these two models, we proposed lower bounds as well as upper bounds on the number of agents, required to fully decontaminate the underlying dynamic graph, monotonically. Our results also highlight the difficulties faced due to the sudden disappearance or reappearance of edges. Our aim in this paper has been to primarily optimize the number of agents required to solve monotone decontamination in these dynamic networks.

\keywords{Mobile agent  \and Network Decontamination \and Dynamic Graphs}
\end{abstract}

\section{Introduction}
We consider a connected network where nodes and edges are infected by various attacks, including but not limited to, viruses, spam, and malware. In real life scenarios, the concern is to keep the network safe from such harmful contamination. In particular, we work on dynamic graphs where the edges can disappear and reappear at any time, provided the underlying graph remains connected. This phenomenon of connectivity is termed as \textit{1-interval connectivity}. In literature, the decontamination has been achieved by using a team of mobile agents, called searchers. Each node of a graph can be in one of three states: contaminated or decontaminated. There are certain variations of contamination which has been studied in literature. A brief survey of which can be found in \cite{Nisse2019}. Some of these variations are: \emph{Edge-search} \cite{parsons1978search}, \emph{Node-search} \cite{kirousis1985interval} and \emph{Mixed-search} \cite{bienstock1991monotonicity}. The \emph{edge-search} phenomenon is a decontamination strategy where an agent traverses through an edge and then decontaminates. The \emph{node-search} phenomenon is slightly different from the edge-search model. Here, an edge is clean when both of its endpoints are occupied by some agents. And lastly, the \emph{mixed-search} is an edge-search strategy with the difference that an edge becomes clean when either an agent traverses through that edge or its endpoints are guarded by at least one agent. Network decontamination is well studied in static graph settings such as hypercubes \cite{flocchini2008decontamination}, tori, and chordal rings \cite{FLOCCHINI2007}. As per our knowledge, we are the first ones to explore this decontamination problem on dynamic graphs. We consider two models of dynamicity, in the first one, the adversary can disappear or reappear certain edges (adhering to 1-interval connectivity), but those edges that are disappeared can only remain disappeared for a finite time. We term this model as \textit{Finite time edge appearance} model or FTEA. In the second model, there is no bound on disappearing time, i.e., in other words an edge can remain disappeared for an infinitely long time. We term this model as \textit{Indefinite edge disappearance} model or IDED. FTEA is a special case of IDED. In this paper, we design algorithms and lower bounds on the number of agents for both these models of dynamicity.

% \subsection{Related Works}
% \label{Rel_work}

\noindent \textbf{Related Works:} Network decontamination, often referred to as graph searching, is a problem that has been studied for decades, beginning with Parsons et al. in the year 1976~\cite{parson1976pursuit}. The basic idea is simple: a group of mobile agents (searchers) tries to clean up a contaminated network, using as few agents as possible. Without any knowledge of the underlying graph, identifying the minimum number of searchers needed is actually NP-complete for general graphs~\cite{megiddo1988complexity}. Luccio et al. \cite{luccio2007network} introduced a different model of decontamination, termed as $m$-immunity. It is defined to be as follows: any decontaminated node with no agent on it is further recontaminated, only if at least $m$ of its neighbors are contaminated. They solved these problem for trees, tori, mesh networks and for graphs with maximum degree 3. Note that in this paper, the value of $m$ is limited to $m=1$ or to the strict majority of the neighbor of the nodes. Further, Flocchini et al. \cite{Flocchini2016mImmunity} extended this problem for any integer value $m\ge 1$. They solved this problem for the underlying graphs to be mesh networks, trees and hypercubes. All these works so far studied in literature, assume the network to be static (i.e., does not change over time). Accordingly, monotone strategies (i.e., once something is cleaned, it stays clean) is designed to decontaminate the underlying graph.

In contrast to static networks, dynamic graphs allow edges to disappear and reappear over time, subject to a connectivity constraint (referred to as 1-interval connectivity, meaning the graph remains connected at every round). Trivially, in comparison to static networks, solving these problems in dynamic networks adds more complexity. Recently, there has been a lot of interest in studying the fundamental problems, such as exploration \cite{das2019graph}, gathering \cite{di2020gathering}, dispersion \cite{kshemkalyani2020efficient}, black hole search \cite{bhattacharya2024black},\cite{Kaur2025}, etc., in dynamic graphs. Accordingly, in this paper, we explored this decontamination problem for dynamic networks. As per our knowledge, this is the first paper to establish bounds on monotone decontamination in dynamic networks.

% \subsubsection{Organization}
% The organisation of the paper is as follows. We discuss the Problem Model and the preliminary results in Section \ref{sec:model}...................

\noindent\textbf{Our Contribution:} 
% Let $\mathcal{G}$ be any dynamic graph with $n$ and $\mu(G)=k$ is the cyclomatic number of $G$, where $G$ is the initial static graph of $\mathcal{G}$. Based on the two models of dynamicity, we obtain the following results.
In this paper, our contributions are as follows.\\
Let $\mathcal{G}$ be any dynamic graph with $n$ nodes, and let $\mu(G)=k$ be the cyclomatic number of the initial static graph $G$ of $\mathcal{G}$. Based on the two models of dynamicity, we obtain the following results.

\noindent\textbf{Finite time edge reappearance Model (FTEA):} In this model, each edge that disappears must reappear within $T$ time. Based on the relation between $k$ and $n$, we obtain the following results.

\noindent\textbf{1.} For $k<n$, we show that if $d$ is the diameter of $G$ then at least $d$ agents are required to solve monotone network decontamination. Further, we proposed an algorithm that requires $d+k$ agents to solve monotone decontamination on $\mathcal{G}$.

\noindent\textbf{2.} For $k\ge n$, we show that at least $n-1$ agents are required to solve monotone decontamination. Accordingly, we proposed an algorithm that solves monotone decontamination in this model of dynamicity with $n$ agents.

\noindent\textbf{Indefinite Edge Disappearance Model (IDED):} 
In this model, edges may disappear for an unbounded period of time. This adds more complexity in comparison to the earlier FTEA model, since FTEA is a special case of the IDED model. Based on the relation between $k$ and $n$, we have the following results.

\noindent\textbf{1.} For $k<n$, we have shown that at least $k+1$ agents are required to monotonically decontaminate a graph in this model. But, since FTEA is a special case of IDED, the bound for FTEA also holds for IDED. This means, if $d>k+1$, then as per the result obtained in the FTEA model, we can say that at least $d$ agents are required to solve monotone decontamination in this model. Hence, this gives our lower bound to be $\max \{d,k+1\}$.

\noindent\textbf{2.} For $k\ge n$, again as per the result obtained in the FTEA model, we propose that at least $n-1$ agents are required to solve network decontamination under IDED model as well.

Lastly, irrespective of $k<n$ or $k\ge n$, we propose a monotone network decontamination algorithm in the IDED model that requires $d+2k$ agents. Summary of the results show in Table \ref{tab:summary}

%\renewcommand{\arraystretch}{1.8} 
% Increased horizontal spacing (from 12pt to 20pt)
%\setlength{\tabcolsep}{20pt} 

\begin{table}[h]
\centering
\small
\renewcommand{\arraystretch}{1.4} 
\setlength{\tabcolsep}{7pt}                    
\begin{tabularx}{\linewidth}{|c|c|X|X|}
\hline
\textbf{Model} & \textbf{Case} & \textbf{Upper Bound} & \textbf{Lower Bound} \\
\hline
\textbf{FTEA} 
  & $k < n$ 
  & $d + k$ (Theorem \ref{d+k result}) 
  & $d$ (Lemma \ref{lem:tree_lb}) \\

\cline{2-4}
  & $k \ge n$ 
  & $n$ (Theorem \ref{n result}) 
  & $n - 1$ (Theorem \ref{n-1 lb}) \\
\hline
\textbf{IDED} 
  & $k < n$ 
  & \multirow{2}{=}{\centering $d + 2k$ (Theorem \ref{d+2k result})} 
  & $\max\{d,\,k+1\}$ (Lemmas \ref{lem:tree_lb},\ref{k+1 wheel}) \\
\cline{2-2}\cline{4-4}
  & $k \ge n$ 
  &  
  & $n-1$ (Theorem \ref{n-1 lb}) \\
\hline
\end{tabularx}

\caption{Summary of the results. 
% FTEA = Finite Time Edge Reappearance, 
% IDED = Indefinite Edge Disappearance. 
The underlying graph is $G=(V,E)$ with $|V|=n$, diameter $d$, 
and cyclomatic number $k$.}
\label{tab:summary}
\end{table}

%Due to page restriction, the pseudo codes, examples and some proofs of theorems and lemmas are deferred to \textbf{Appendix}.

%\vspace{-0.25cm}
\section{Model and Preliminaries}
\label{sec:model} 

\noindent\textbf{Dynamic Graph Model:} The dynamic graph is modeled as a {\it time-varying graph} termed as TVG, where it is denoted by $\mathcal{G}=(V,E,\mathbb{T},\rho)$, where $V$ indicates the set of vertices, $E$ indicates the set of edges, $\mathbb{T}$ is said to be the {\it temporal domain}, which is symbolized to be $\mathbb{Z}^+$, as we consider discrete time steps. Further we define $\rho: E\times \mathbb{T}\rightarrow \{0,1\}$ as the {\it presence} function, which indicates whether an edge is present or absent at any given time. The graph $G = (V, E)$ is the initial underlying static graph of the \emph{time varying graph} $\mathcal{G}$. This underlying graph $\mathcal{G}$, with $|V|$ many vertices and $|E|$ many edges, is stated to be the footprint of $\mathcal{G}$. The degree of a node $v\in V$ is denoted by $\delta_v$, where $\Delta$ denotes the maximum degree in $G$. The footprint graph is an undirected anonymous graph (i.e., the nodes in $G$ have no IDs). Each edge with respect to a node $v\in V$ is uniquely labeled by a port numbers which ranges from $\{0,1,\dots,\delta_{v}-1\}$. The adversary has the ability to disappear or reappear any edge from $G$ at any round. This disappearance or reappearance of edge(s) is done by the adversary, keeping in mind that the underlying graph remains connected. This connectivity property is termed as {\it 1-interval connectivity}.

In this paper, we consider two dynamic models based on the time the adversary can make an edge disappear. The first model is defined as {\it finite time edge reappearance model} (FTEA), where there exists a constant $T$ ($>0$) for which at most, the adversary can make any edge disappear, at time $T+1$, it has to make that edge reappear. The Second model is defined as {\it indefinite edge disappearance} time (IDED), where there is no existence of such a $T$, i.e., the adversary can make an edge disappear for any amount of time (i.e., even not finite time). Note that, in both case, the adversary must maintain 1-interval connectivity property in the underlying graph. 

\noindent\textbf{Agent Model:} The agents are initially co-located at a node $r\in V$, chosen by the adversary. This node $r$ is also termed as \emph{Home}. They work synchronously. So, time is calculated in terms of rounds. The agents have distinct IDs, and they do not have any knowledge about any graph parameters, such as $|V|$, $|E|$, $\Delta$, etc. When an agent visits any node $v\in V$, it can know the degree of that node in the footprint graph $G$. The agents can see the presence or absence of the edges at that round. For each adjacent edge present, the agents can see whether it is contaminated or not. The agents communicate via the {\it face-to-face} model of communication, where they can share information with another agent when they are at the same node in the same round.  

\noindent\textbf{Decontamination Model:} Initially, we define all the edges and vertices to be \emph{contaminated}. Whenever an agent visits a node, or traverses through an edge, that specific node or edge becomes \emph{decontaminated}. 
\emph{Recontamination} can occur if there exists a path, unguarded by agents, from a contaminated component to the decontaminated component. For example, in a cycle graph $C_3$, an agent starts from a node $v_1$, making it initially decontaminated. It goes to $v_2$, decontaminating the edge $(v_1, v_2)$. But since the vertex $v_3$ is still contaminated and it's directly joined with $v_1$ and hence with the edge $(v_1, v_2)$, it recontaminates both of them, but not the node $v_2$ because the agent is still present on it.

%\subsection{Definition and Preliminaries}
Here, we define the required definitions and problem definition.

\begin{definition}[Contamination Degree]
The set $\{C_t(v)\}$ is the collection of adjacent edges of the node $v$, which are contaminated at time $t$. We define it as $\{C_t(v)\} = \{(u, v) \in E_t \mid (u, v) \text{ is contaminated at time } t\}$. Moreover, $C_t(v)$ denotes the contamination degree of $v$ at time $t$. 
\end{definition}

% \begin{definition}[Agent Occupancy]
% Agent occupancy at a node at time $t$ is defined by a function $A_t()$. This function is defined as follows: $A_t:V\rightarrow \{1,2,\dots,k\}$, where $A_t(v)=c$ indicates that at time $t$, there exists $c$ many agents at the node $v$.  
% \end{definition}

\begin{definition}[Cyclomatic Number]
The cyclomatic number of a connected undirected graph $ G = (V, E) $ is given by
\[\mu(G) = |E| - |V| + 1.\]
Equivalently, $ \mu(G) $ is the number of independent cycles in $G$, and equals the number of edges that must be removed to obtain a spanning tree. 
\end{definition}

\begin{definition}[Feedback Edges\cite{kudelic2022feedback}]
Let $G=(V,E)$ be a static connected graph and $T=(V,E')\subseteq G$ be a fixed spanning tree of $G$.  The \emph{feedback edge set} is $\Tilde{E} = E\setminus E'$, and $ |\Tilde{E}|=k$ equals the cyclomatic number $\mu(G)$.
\end{definition}

\begin{definition}[Separator Vertex]
A vertex $ v \in V$ is called a \emph{separator vertex}, if say at round $t$, it is adjacent to a contaminated edge. The set of all separator vertices is denoted by $\Sigma$. Moreover, the agents present at these separator vertices can be called separators or separator agent. 
% The collection of all separator vertices at time $t$ is denoted as $\Sigma_t$, and the separator vertices lying in the cleaned region are denoted as $\Sigma_{I_t} = \Sigma_t \cap I_t$.

\end{definition}

\begin{definition}[Problem Definition]
Given an anonymous, port-labeled dynamic graph $\mathcal{G}$, an initial node $Home$, determine the minimum number of agents required to solve the following protocols:
\begin{itemize}
    \item Decontaminates all vertices and all existing edges of the underlying dynamic network.
    \item Ensures monotonicity by preventing any recontamination.
\end{itemize}
\end{definition}

\begin{remark}
    It may be noted that, in all our algorithms we assume the initial deployment of agents to be sufficient in number. This sufficient number depends on the underlying graphs diameter (i.e., $d$), total number of nodes (i.e., $n$) and the cyclomatic number (i.e., $k$). But, this knowledge is only confined to the initial deployment on the number of agents, but nowhere in our algorithms do the agents require the help of any of these global graph parameters.
\end{remark}

%===================================================================

\section{Lower Bound}

In this section we propose the lower bound results obtained. We divide this section in to two parts, based on the dynamicity model.

\subsection{Lower Bound for Finite Time Edge Reappearance Model}
This argument gives the lower bound for the case of the finite-time edge reappearance model.

\begin{theorem}[Lower Bound]\label{n-1 lb}
There exists a graph $G$ with $n$ nodes, and maximum degree $\frac{n}{2}$, on which no deterministic algorithm can solve network decontamination monotonically with $n-2$ initially co-located agents, in the finite edge reappearance model of dynamicity, where $n>4$.
\end{theorem}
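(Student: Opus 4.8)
The plan is to exhibit a concrete graph $G$ on $n$ nodes with maximum degree $n/2$ and an adversary strategy that defeats any team of $n-2$ agents. First I would construct $G$ as follows: take two "hubs" $a$ and $b$, and a set of $n-2$ intermediate vertices; connect $a$ to some of them and $b$ to the others so that each of $a,b$ has degree exactly $n/2$, and arrange the intermediate vertices into a structure (for instance a long path, or a collection of internally-disjoint $a$--$b$ paths through these vertices) so that the graph is highly cyclic and every $a$--$b$ separator is large. The key property I want is that any vertex cut separating the contaminated part of the graph from the clean part must have size at least $n-1$: intuitively, the intermediate vertices form many "parallel" routes between two regions, so to keep a clean region isolated an agent must sit on (essentially) every intermediate vertex simultaneously, plus one of the hubs.

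The second step is the adversary argument. Since the agents start co-located at $Home = r$ and the graph is $1$-interval connected, at the moment the clean region is about to grow past the first "bottleneck" the agents occupy at most $n-2$ vertices. I would show that whatever $n-2$ vertices the agents occupy, the adversary can choose which edges to keep present (maintaining connectivity) so that there is a present path from some still-contaminated vertex to a decontaminated vertex that is not blocked by any agent — forcing recontamination and violating monotonicity. Concretely, because the cut between the two sides has size $\ge n-1 > n-2$, for any placement of the $n-2$ agents there is at least one cut-vertex left uncovered; the adversary keeps exactly the edges through that uncovered vertex present (and deletes enough of the others to respect $1$-interval connectivity while still leaving a contaminated-to-clean path), recontaminating a previously cleaned edge or vertex. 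Since $T$ can be taken large (the edges need only be absent for a bounded time across the finite decontamination process, or we simply reuse edges after $T$ rounds), this is consistent with the FTEA model.

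The delicate points are: (i) verifying that the constructed $G$ genuinely has minimum $a$--$b$ vertex-cut at least $n-1$ while keeping $\Delta = n/2$ and $n > 4$ — this is a routine but careful counting/Menger-type check on the chosen gadget; and (ii) the bookkeeping showing that at the critical round the decontaminated set is a proper nonempty subset of $V$ whose boundary is exactly this large cut, so that the pigeonhole "uncovered cut vertex" argument applies — I would formalize this by looking at the first round at which an agent crosses from one side of the bottleneck to the other. The main obstacle I anticipate is (ii): making the "critical round" argument fully rigorous for an arbitrary deterministic algorithm, since the algorithm may clean edges in an unexpected order or shuttle agents back and forth; I would handle this by defining the clean region as a monotone-growing set (which it must be, by the monotonicity requirement the algorithm is trying to satisfy) and arguing that its vertex boundary, at the first time it is nonempty and not all of $V$, necessarily contains one of the large cuts of the gadget, then letting the adversary act against that boundary.
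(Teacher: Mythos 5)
There is a genuine gap at the heart of your proposal: the structural property you want your gadget to have --- that \emph{every} vertex cut separating the contaminated part from the clean part has size at least $n-1$ --- cannot hold in any graph of maximum degree $n/2$. Since $\Delta = n/2 < n-1$, the graph is not complete, so its vertex connectivity is at most $n/2$; more to the point, the separators that actually arise during a run are often small. At the start the clean region is just $Home$ (boundary of size $1$), and near the end, when a single vertex $v$ remains contaminated, the set of clean vertices that need guarding is contained in $N(v)$ and hence has size at most $n/2 \le n-2$. So the pigeonhole step ``the boundary has at least $n-1$ vertices, hence the $n-2$ agents miss one of them'' is unavailable at exactly the rounds where you would need to invoke it, and no choice of the two-hub gadget can repair this, because the obstruction comes from the degree bound that the theorem itself imposes. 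A lower bound of $n-1$ agents cannot be extracted from any single large cut that the agents must cover; it has to come from an accounting argument that simultaneously charges the guards that become \emph{pinned} at an intermediate stage and the additional agents still required to make progress.

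This is in fact how the paper argues. It takes $G=K_{n/2,n/2}$ with all agents starting at $a_1\in A$, and case-splits on whether some $b\in B$ becomes \emph{fully} decontaminated (all $n/2$ incident edges clean) before every vertex of $B$ has been visited. In the main case, at that moment every vertex of $A$ has been visited yet is still incident to a contaminated edge (some vertex of $B$ is unvisited), so all $n/2$ vertices of $A$ must carry a guard to preserve monotonicity; the proof then argues that the remaining $n/2-2$ agents cannot fully decontaminate any further vertex of $B$, so either progress halts or a pinned guard leaves and its vertex is recontaminated. If you want to salvage your write-up, replace the universal-large-cut claim by this kind of ``pinned guards plus insufficient free agents'' count at a well-chosen critical round. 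Note also that the dynamic adversary is not the missing ingredient: deleting edges never creates a recontamination path in this model (only a \emph{reappearing} contaminated edge next to an unguarded clean vertex does), so your step in which the adversary schedules deletions around an uncovered cut vertex contributes much less than your proposal assumes.
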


\begin{figure}[ht]
  \centering
  \begin{tikzpicture}[every node/.style={circle,draw,minimum size=6mm}, x=1.5cm]
    % Left partition A
    \foreach \i in {1,...,4} {
      \node (A\i) at (0,-\i) {$a_{\i}$};
    }
    % Right partition B
    \foreach \j in {1,...,4} {
      \node (B\j) at (3,-\j) {$b_{\j}$};
    }
    % All edges between A and B
    \foreach \i in {1,...,4} {
      \foreach \j in {1,...,4} {
        \draw (A\i) -- (B\j);
      }
    }
  \end{tikzpicture}
  \caption{The complete bipartite graph $K_{4,4}$ with partitions $\{a_1,\dots,a_4\}$ and $\{b_1,\dots,b_4\}$.}
  \label{fig:K44}
\end{figure}
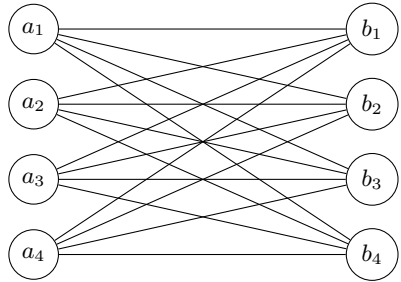
\begin{proof}
Let us consider our underlying graph to be $K_{n/2,n/2}$, where $A$ and $B$ are two disjoint partitions, each of size $\frac{n}{2}$, where $n>4$. Let the group of $n-2$ mobile agents, start from any node, we term that node as $Home$. Without loss of generality (WLOG), let us assume that our $Home$ node is $a_1\in A$ (refer to Fig. \ref{fig:K44}). A node is said to be \emph{clean}, when all its adjacent edges have been decontaminated by the agents. Now, as per the characteristics of $K_{n/2,n/2}$, the vertex $a_1$ has degree $\frac{n}{2}$. Let $\mathcal{A}$ be any algorithm that claims to solve the network decontamination problem monotonically.

Let $r$ ($>0$) be the round at which, during the execution of $\mathcal{A}$, an agent first visits the $\frac{n}{2}$-th adjacent node of $Home$. Let us call this node $b_t$ (where $t\in\{1,2,\dots,\frac{n}{2}\}$). Now, for $\mathcal{A}$, there can be two strategies. First, within round $r$ there exists at least one node (say, $b_{t1}$) in $B$ that is \textit{fully decontaminated}, i.e., in other words $b_{t1}$ along with all its adjacent edges are decontaminated, implying that its neighboring nodes are also decontaminated. Second, within round $r$, no such node of the form $b_{t1}$ exists. For each of these two strategies, we infer a contradiction in the following manner.

\noindent{\it Strategy-1:} Let, without loss of generality, $b_{t1}$ be the first node to be fully decontaminated, and let this happen at round $r_0$ ($<r$). This means all neighbour nodes of $b_{t1}$, i.e., each node in $A$ contains at least one agent. If any one of these agents holding a node in $A$ (say, that node is $a_{t1}$), decides to leave $a_{t1}$, then $a_{t1}$ gets recontaminated and monotonicity fails. The reason being, there exist at least two adjacent edges of $a_{t1}$ which are contaminated. As otherwise there exists at least one node in $A$, say $b_{t2}$ which is fully decontaminated before round $r$. It is because at round $r_0$, $\frac{n}{2}$ agents are holding a node each in $A$. Remaining, $\frac{n}{2}-2$ agents cannot fully decontaminate any node in $B$, as those nodes must have exactly $\frac{n}{2}-1$ contaminated edge. This concludes that none of the agents in $A$ can move after $b_{t1}$ is fully decontaminated. This also contradicts that $\mathcal{A}$ solves the problem, as no other node in $A$ can be fully recontaminated with the remaining vertices after round $r_0$. 

\noindent{\it Strategy-2:} After round $r$, all nodes in $B$ are occupied by at least one agent. So, before $r$, no node in $A$ can be fully decontaminated. So, each visited node in $A$ till round $r$ must contain at least one agent. Let there be $x$ many such nodes. Observe that at round $r$, one of these $x$ many nodes must be fully decontaminated (since each node in $A$ contains at least one vertex). So, at this moment, at least $x-1$ agents remain occupying the remaining $x-1$ yet to be fully decontaminated nodes in $A$. Now, observe that any node in $B$ must have at least $\frac{n}{2}-x$ many contaminated edges at round $r$. But since $\mathcal{A}$ is said to achieve monotonicity, so the remaining agents, that are free to move (or not occupying any contaminated node) is $\frac{n}{2}-x-1$ ($=n-2-(\frac{n}{2}+x-1)$), and these remaining agents cannot fully decontaminate any more node in $B$, which leads to a contradiction.

This shows that $\mathcal{A}$ fails to decontaminate $K_{n/2,n/2}$ with $n-2$ co-located agents, monotonically. \qed \end{proof}

% \noindent\textbf{Observation:} The lower bound of $n-1$ searchers can also be written in terms of $\mu$. Since $\mu= \frac{n^2}{4} - n+1$ for this graph, solving for $n-1$ gives $n-1= 2\sqrt{\mu}+1$. Thus, monotone decontamination requires at least $2\sqrt{\mu}+1$ agents.

%===========================================================

\subsection{Lower Bound for Indefinite Edge Disappearance Model}
\label{sec:wheel-lower}

In this section, we show that there exists a graph, with cyclomatic number $\mu(G)=k$, on which any monotone decontamination protocol in the indefinite edge disappearance model with $k$ agents fails.

\begin{theorem}\label{k+1 wheel}
There exists a graph $G$ with $n$ (with $n>4$) nodes and maximum degree $n-1$, with $\mu(G)=n-1$, on which no deterministic algorithm can solve network decontamination monotonically with $\mu(G)$ many initially co-located agents, in the infinite edge disappearance model.
\end{theorem}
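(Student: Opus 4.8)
The plan is to exhibit a concrete graph meeting the stated parameters and then argue, via an adversarial strategy that exploits the indefinite disappearance of edges, that $\mu(G)$ agents are insufficient. The natural candidate is the wheel graph $W_n$: a cycle $C_{n-1}$ on vertices $v_1,\dots,v_{n-1}$ plus a central hub $h$ adjacent to all of them. This graph has $n$ vertices, the hub has degree $n-1$, and its cyclomatic number is $|E|-|V|+1 = (2(n-1))-(n)+1 = n-1 = \mu(G)$, so all the numerical claims in the statement are satisfied. Fix a monotone decontamination algorithm $\mathcal A$ run by $k = n-1$ co-located agents starting at $Home$; I would allow $Home$ to be the hub or a rim vertex and handle both (or argue WLOG the adversary can force the relevant configuration).

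The core idea is to force the agents to simultaneously guard too many separator vertices. In $W_n$, once the hub $h$ is decontaminated it is adjacent to all $n-1$ rim vertices; to keep $h$ clean while any rim vertex is still contaminated, an agent must sit on $h$ or on every contaminated rim neighbour. Likewise each rim vertex has degree $3$ (two rim neighbours and the hub), so a decontaminated rim vertex left unguarded is recontaminated if any of those three incident edges is contaminated. First I would track the frontier between the clean region and the contaminated region over the course of $\mathcal A$; by monotonicity this frontier must be fully guarded at every round. The adversary's job is to keep the boundary large. Concretely, whenever the agents try to advance the clean region along the rim, the adversary removes (for an unbounded time, as IDED permits) the spoke edges $(h,v_i)$ on the far side of the frontier so those spokes are not "currently present" and need no guarding — but it keeps enough spokes present, near the frontier, that progress on the rim does not help clean the hub, and it keeps all rim edges present so the rim cannot be short-circuited. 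The quantitative claim I want is: at the first round $r$ at which the last edge incident to some vertex is about to be decontaminated (mirroring the structure of the Theorem~\ref{n-1 lb} proof), the number of distinct vertices that must each carry a guarding agent exceeds $k = n-1$, yielding the contradiction.

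The key steps, in order: (1) verify the parameters of $W_n$; (2) set up $\mathcal A$ and, as in the previous proof, let $r$ be the first round at which an agent is poised to complete the decontamination of all edges around some vertex, and split into the two symmetric strategies — either some vertex becomes fully decontaminated before $r$ (Strategy 1) or none does (Strategy 2); (3) in Strategy 1, argue that full decontamination of the hub forces an agent on every contaminated rim vertex, i.e. essentially all $n-1$ rim vertices, so no agent can ever move to finish the remaining rim edges — contradiction; full decontamination of a rim vertex $v_i$ forces agents on $v_{i-1}, v_{i+1}, h$ and, propagating, forces a guard on a long stretch of the rim plus the hub, again exhausting the $n-1$ agents before the job is done; (4) in Strategy 2, count as in Theorem~\ref{n-1 lb}: if $x$ rim vertices plus possibly the hub have been visited by round $r$ and each must be guarded, then one of them is just becoming fully clean, at least the others ($\ge x-1$ of them) stay occupied, and the free agents number at most $k - (x-1) - [\text{hub guard}]$, which is too few to clean the remaining contaminated edges around any boundary vertex; (5) use the IDED power explicitly — the adversary keeps the critical spoke or rim edges absent for longer than any finite horizon, so the agents can never "wait them out," which is exactly why $k$ suffices in no strategy here even though it might in a finite-reappearance world.

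The main obstacle I anticipate is step (3)/(4): making the counting tight enough that $n-1$ agents genuinely fall one short, rather than merely "not obviously enough." The danger is that a clever algorithm decontaminates the hub first (trivially, since $Home$ might be the hub), parks one agent on $h$ forever, and then uses the remaining $n-2$ agents as a sweeping barrier along the cycle $C_{n-1}$ — and a cycle on $n-1$ vertices can be monotonically decontaminated by very few agents when it is static. So the adversary must use edge disappearances on the rim to repeatedly break the barrier: every time a contiguous clean arc of the rim is extended, the adversary removes the edge just ahead of the frontier and simultaneously restores a previously-removed edge behind a decontaminated arc, reconnecting a contaminated vertex to the clean region through a back door, thereby forcing the agents to re-guard. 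Formalizing this "shell game" — showing the adversary can always maintain $1$-interval connectivity while keeping at least $n-1$ vertices on the clean/contaminated boundary that each demand a distinct agent — is the crux, and I would devote the bulk of the proof to constructing that adversarial schedule and proving the boundary-size invariant it maintains.
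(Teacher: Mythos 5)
You picked the right graph (the wheel, with the parameter check done correctly), but the heart of the theorem is the adversarial edge schedule, and that is exactly the part you leave unconstructed --- you say yourself that formalizing the ``shell game'' and its boundary-size invariant is the crux and would occupy the bulk of the proof. As sketched, the schedule is also pointed in the wrong direction: you have the adversary \emph{keep all rim edges present} and juggle only the spokes ahead of/behind the frontier. With the rim intact, the counting from Theorem~\ref{n-1 lb} does not transfer (rim vertices have degree $3$, not $n/2$), and your own worry is well-founded: an algorithm that parks one agent at the hub and cleans the rim by zig-zagging hub--rim--hub (so that spokes get traversed and cleaned as the frontier advances) keeps the guarded frontier of constant size, and it is not at all clear that your adversary can force $n-1$ simultaneously pinned guards. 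So the argument as proposed has a genuine gap, not just missing polish.

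The paper's schedule is much blunter and makes the count trivial: the adversary places all $n-1$ agents at the hub $v_0$ and deletes \emph{every} rim edge from the start, keeping them absent indefinitely (this is where IDED is used --- the agents can never wait the rim out). The surviving graph is a star, so each rim vertex can only be reached through its spoke; and once a rim vertex $v_t$ is visited and cleaned, it must be guarded forever, because its two rim edges are contaminated and the adversary can restore either one the moment $v_t$ is left unguarded, breaking monotonicity. Hence every visited rim vertex permanently absorbs one agent, and since all $n-1$ rim vertices must eventually be visited, all $n-1$ agents end up pinned, one per rim vertex. At that point the adversary restores the rim: each rim vertex now has contamination degree $2$ but only a single guard, so no agent can ever move again and the rim edges are never decontaminated --- contradiction. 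Note also that in this model the adversary chooses the starting node, so your case split on $Home$ being hub versus rim is unnecessary. If you want to salvage your write-up, replace the frontier/shell-game adversary with this ``delete the whole rim, pin one agent per rim vertex, then restore the rim'' schedule; the Strategy-1/Strategy-2 scaffolding imported from Theorem~\ref{n-1 lb} is then not needed at all.
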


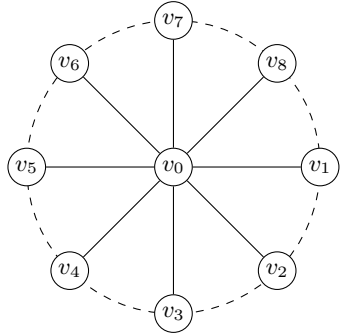
\begin{figure}[ht]
\centering
\begin{tikzpicture}[scale=1.3,
every node/.style={circle,draw,fill=white,minimum size=5mm,inner sep=1pt}]

    % Radius for rim circle
    \def\R{1.5}

    % Draw the rim circle first (goes to the background)
    \draw[dashed, thin] (0,0) circle (\R cm);

    % Place rim vertices v1..v8 clockwise
    \foreach \i in {1,...,8} {
        \node (N\i) at ({-45*(\i-1)}:\R cm) {$v_{\i}$};
    }

    % Center vertex v0
    \node (N0) at (0,0) {$v_{0}$};

    % Draw spokes
    \foreach \i in {1,...,8} {
        \draw (N0) -- (N\i);
    }
\end{tikzpicture}
\caption{The wheel graph $G$ on 9 vertices with cyclomatic number $\mu(G)=8$, where the dotted edges imply the disappeared edges.}
  \label{fig:wheel-lower}
\end{figure}

\begin{proof}
In contradiction, let $\mathcal{A}$ be any deterministic monotone decontamination strategy using $n-1$ agents. Let us consider $G$ to be a wheel graph with the vertex set $V=\{v_0,v_1,\dots,v_{n-1}\}$, where $v_0$ is the centre vertex and the remaining are outer vertices (for reference see Fig. \ref{fig:wheel-lower}). Suppose that the adversary selects $v_0$ to be the initial node for these $n-1$ agents. In addition, suppose the adversary removes each edge of the form $(v_i,v_{i+1})$ ($i\in\{1,\cdots,n-2\}$) and $(v_{n-1},v_1)$. In this situation, to clean any vertex $v_i$ ($i\in\{1,\dots,n-1\}$), an agent must visit it through $v_0$. Let $r_0$ be the round, when any agent from $v_0$ first visits any node of the form $v_i$, say that the first node visited is $v_t$. Now, we claim that from round $r_0$ onward, there must exist at least one agent guarding the node $v_t$. As otherwise, the adversary can reappear the contaminated edges $(v_t,v_{t-1})$ or $(v_{t+1},v_t)$ and hence monotonicity fails. Since, this argument is true in general for each $v_i$. Hence, this proves that eventually $n-1$ agents get stuck at each of these vertices. Say that the round is $r'_0$ ($>r_0$), when each of the $n-1$ nodes is guarded by one agent. From $r'_0+1$ round onwards, if the adversary reappears on each edge of the form $(v_i,v_{i+1})$ ($i\in\{1,\cdots,n-2\}$) and $(v_{n-1},v_1)$, then none of these agents can move from their respective vertices. It is because the degree of contaminated edges from each $v_i$ is exactly 2, whereas there is only a single agent guarding $v_i$. This means, $G$ cannot be fully decontaminated using the strategy $\mathcal{A}$. Hence, this leads to a contradiction.\qed \end{proof}

\begin{corollary}
In the worst case, over all dynamic graphs with cyclomatic number $\mu(G)=k$, at least $k+1$ agents are necessary for monotone decontamination under adversarial edge dynamics.
\end{corollary}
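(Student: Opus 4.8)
The plan is to read the corollary off Theorem~\ref{k+1 wheel} by instantiating the wheel construction at the appropriate size. First I would record the relevant combinatorics of the wheel: the wheel $G$ on $n$ vertices consists of a hub $v_0$ joined to $n-1$ rim vertices $v_1,\dots,v_{n-1}$ together with the rim cycle, so $|V|=n$ and $|E|=2(n-1)$, whence $\mu(G)=2(n-1)-n+1=n-1$. Consequently, for any target value $k$ of the cyclomatic number, choosing $n=k+1$ produces a wheel $W_{k+1}$ with $\mu(W_{k+1})=k$ exactly; moreover $n=k+1>4$ precisely when $k\ge 4$, so this choice meets the hypothesis $n>4$ of Theorem~\ref{k+1 wheel}.

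Next I would apply Theorem~\ref{k+1 wheel} to $G=W_{k+1}$. The theorem supplies an adversarial schedule in the IDED model — remove every rim edge $(v_i,v_{i+1})$ so that only the spokes remain (the footprint stays connected as a star, so $1$-interval connectivity is respected), force one agent to become stuck guarding each rim vertex in turn, and then reappear all rim edges simultaneously — under which no deterministic monotone decontamination algorithm succeeds with $\mu(G)=k$ initially co-located agents. Since $W_{k+1}$ equipped with this schedule is itself a dynamic graph of cyclomatic number $k$, this exhibits, for each such $k$, an instance on which $k$ agents are insufficient; hence at least $k+1$ agents are necessary in the worst case.

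The one point I would state carefully, and the only mild subtlety, is the range of $k$: the reduction is clean for $k\ge 4$ because Theorem~\ref{k+1 wheel} assumes $n>4$, and for the finitely many smaller values one either verifies them directly or simply reads the corollary as an asymptotic (worst-case) statement, since "over all dynamic graphs with cyclomatic number $k$" only asserts that \emph{some} bad instance exists for each (sufficiently large) $k$. I would also make explicit that the "adversarial edge dynamics" of the corollary is exactly the indefinite edge disappearance model of Theorem~\ref{k+1 wheel}, so the two statements align. I do not anticipate any genuine obstacle: all the work is in Theorem~\ref{k+1 wheel}, and the corollary is just the observation that the wheel family realizes every cyclomatic number, making the lower bound $\mu(G)+1$ tight in the parameter $k$ across the whole range.
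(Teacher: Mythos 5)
Your proposal is correct and matches the paper's intent: the corollary is simply Theorem~\ref{k+1 wheel} instantiated on the wheel with $n=k+1$ vertices, so that $\mu(G)=k$ and $k$ co-located agents provably fail under the indefinite edge disappearance schedule, making $k+1$ necessary in the worst case. Your explicit handling of the $n>4$ hypothesis (i.e., $k\ge 4$, with small $k$ treated separately or asymptotically) is a small but fair refinement that the paper leaves implicit.
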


Next, we discuss a lower bound result, that shows that on graphs with diameter $d$, at least $d$ agents are required to decontaminate those graphs under the \textit{finite time edge appearance} model.

\begin{lemma}\label{lem:tree_lb}
There exists a graph $G$ consisting of $n$ nodes with diameter $d$, on which there does not exist any deterministic monotone decontamination strategy that works with $d-1$ co-located agents, irrespective of the finite edge reappearance or indefinite edge disappearance model.
\end{lemma}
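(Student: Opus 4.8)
The plan is to exhibit one fixed family of \emph{static} graphs of diameter $d$ on which $d-1$ co‑located agents provably fail, and then observe that this is enough for both models: in FTEA as well as IDED the adversary is free to keep every edge present at all times (this trivially respects $1$-interval connectivity), so any impossibility proved for the static graph holds verbatim "irrespective of the model''. For $d\ge 2$ I take $G$ to be a clique $K_{d+1}$ on vertices $c_0,\dots,c_d$ together with a pendant path $p_0-p_1-\cdots-p_{d-1}$ identified so that $p_{d-1}=c_0$, with $Home=p_0$. Then $n=2d$, and the eccentricity of $p_0$ is $(d-1)+1=d$ while every clique vertex is at distance $\le d$ from everything, so $\mathrm{diam}(G)=d$; the case $d=1$ is vacuous.

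The core of the argument is to show that monotone decontamination of $K_{d+1}$ alone already needs at least $d$ agents. I would prove the invariant: \emph{as long as some clique vertex is still contaminated, every clean clique vertex carries an agent}. This is because in $K_{d+1}$ every vertex is adjacent to every other, so a clean unguarded vertex has a contaminated neighbour and is recontaminated unless the entire clique is clean. In particular, before completion, $\#\{\text{clean clique vertices}\}\le\#\{\text{agents on clique vertices}\}\le d-1$. Next I would show the count of clean clique vertices can only increase by (i) a \emph{buddy move} — an agent leaves a vertex still holding another agent and enters a contaminated vertex — or (ii) an agent entering the \emph{last} contaminated vertex, after which the clique is fully clean. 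A buddy move never raises the count above the number of agents (it merely redistributes them over at most that many vertices), and step (ii) requires the other $d$ clique vertices to be simultaneously clean and guarded, i.e.\ $d$ agents. Hence with $d-1$ agents the process stalls with at most $d-1$ clean clique vertices and at least two contaminated ones, and moving any agent off a clean vertex recontaminates it — contradiction. Attaching the pendant path cannot help: the agents start at $p_0$, so they must first traverse (and clean) the path to reach $c_0$, and from then on they are confronted with exactly $K_{d+1}$. As a sanity check one also shows $d$ agents \emph{suffice} for $G$: the lead agent sweeps the path to $c_0$ (each $p_j$ ends up flanked by clean vertices, so it stays clean), the remaining $d-1$ agents file down to $c_0$, all $d$ agents funnel onto $c_0,\dots,c_{d-1}$ (cleaning each edge $(c_0,c_i)$ en route), one agent then enters $c_d$ — at which instant the whole clique is clean — and the remaining clique edges are cleaned by shuffling agents between clean vertices, with $c_0$ kept guarded so the path is never recontaminated. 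Thus $d$ is exactly the threshold.

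The step I expect to be the main obstacle is the recontamination bookkeeping in the lower bound: making the invariant "$\#\{\text{clean clique vertices}\}\le\#\{\text{agents}\}$ until the clique is fully clean'' precise for an \emph{arbitrary adaptive} algorithm, and arguing that no clever maneuver — buddy moves, temporarily parking agents on the pendant path, cleaning clique edges "out of order'' — can circumvent it. A secondary technicality is to confirm that the pendant path neither aids the decontaminator (e.g.\ by being used to "store'' contamination) nor becomes recontaminated during the clique phase; both follow from the path being pendant at $c_0$, which stays guarded throughout.
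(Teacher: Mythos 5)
Your construction and argument are correct in substance, but they take a genuinely different route from the paper. The paper proves this lemma on a bounded-degree tree-like graph (essentially a complete binary tree, $k=0$): it argues that the co-located agents are forced to split at every branching vertex, so the group sizes shrink until single agents sit at vertices with two contaminated incident edges and can no longer move. You instead reduce to a purely static instance (observing, correctly, that in both FTEA and IDED the adversary may simply never remove an edge, so a static impossibility transfers verbatim to both models -- the paper handles the two models by a brief ``can be modified'' remark instead), and you build diameter into the instance by attaching a pendant path of length $d-1$ to a clique $K_{d+1}$, with $Home$ at the far end of the path. The heart of your proof is a guard-counting invariant: in any monotone execution, a clean clique vertex incident to a contaminated edge must carry an agent, and since every contaminated clique vertex keeps all of its incident edges contaminated, the number of clean clique vertices can never exceed the number of agents before the clique is finished; the endgame then cannot be completed with $d-1$ agents. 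This is much closer in spirit to the paper's $K_{n/2,n/2}$ lower bound (Theorem~\ref{n-1 lb}) than to its own proof of Lemma~\ref{lem:tree_lb}. What each approach buys: the paper's tree keeps maximum degree $3$ and $k=0$, showing the diameter barrier already on sparse acyclic topologies (and their remark extends it to $k\ge 1$), whereas your dense gadget ($n=2d$, high-degree clique) gives a cleaner, more robust counting argument against arbitrary adaptive strategies and covers both dynamicity models in one stroke via the static reduction.

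Two caveats. First, the step you flag as the main obstacle is indeed where care is needed: with simultaneous moves, several contaminated clique vertices can be entered in the same round, so ``the last contaminated vertex'' is not well defined; the fix is to look at the first round $t^{\ast}$ at the end of which all clique vertices are clean, note that just before it at least two clique vertices were still contaminated (since all clean ones were pinned by guards and you have only $d-1$ agents), observe that edges among those still-contaminated vertices cannot be traversed in round $t^{\ast}$ (neither endpoint is occupied at its start), and count that releasing a previously guarded vertex in that round requires as many stacked agents as it has contaminated edges into the contaminated set; this yields strictly more than $d-1$ agents, completing the contradiction. Second, your ``sanity check'' that $d$ agents suffice on this $G$ is doubtful and, fortunately, not needed: at the instant the clique becomes vertex-clean, many clique edges are still contaminated, so every clique vertex is still a separator and ``shuffling agents between clean vertices'' with only $d$ agents on $d+1$ such vertices would itself violate monotonicity. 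Dropping that aside (or weakening it to an upper bound with more agents) leaves the lemma's proof intact, since only the insufficiency of $d-1$ agents is claimed.
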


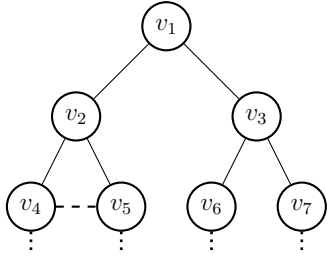
\begin{figure}
        \centering
        \begin{tikzpicture}[
    scale = 0.8, transform shape,
    every node/.style={circle, draw, thick, inner sep=2pt, minimum size=8mm, font=\large},
    level distance=1.5cm,
    level 1/.style={sibling distance=3cm},
    level 2/.style={sibling distance=1.5cm}
]

\node (v1) {$v_1$}           
    child {node (v2) {$v_2$}   
        child {node (v4) {$v_4$}} 
        child {node (v5) {$v_5$}} 
    }
    child {node (v3) {$v_3$} 
        child {node (v6) {$v_6$}}
        child {node (v7) {$v_7$}} 
    };

\draw[dashed, thick] (v4) -- (v5);

\draw[dotted, line width=1pt] (v4) -- ++(0,-0.8);
\draw[dotted, line width=1pt] (v5) -- ++(0,-0.8);
\draw[dotted, line width=1pt] (v6) -- ++(0,-0.8);
\draw[dotted, line width=1pt] (v7) -- ++(0,-0.8);

\end{tikzpicture}
\label{fig:tree-joined}
\caption{This shows a graph $G$ where $d$ is the diameter and a possible $k=1$. The dotted edge between $v_4$ and $v_5$ denotes the possibility for $k=1$.}
    \end{figure}

\begin{proof}
We prove this claim by considering the finite edge reappearance model, but it can be easily modified to the indefinite edge disappearance model. Let $v_1\in V$ be an initial vertex where $d-1$ agents are placed. We assume that, initially, $G$ is contaminated, i.e., all vertices and edges of $G$ are contaminated. Let $\mathcal{A}$ be such a strategy that claims to solve network decontamination with $d-1$ agents, starting from $v_1$. Note that, in the first round, all the agents can't leave $v_1$ to go, say $v_2$, because the contaminated edge $(v_3,v_1)$ recontaminates the unguarded node $v_1$. This prevents the \emph{monotonicity} of our process.

So, the only possibility that remains is for the agents to split up, such that one of the 2 agent groups (say $\{A_1\}$) visits $v_2$ and for the other agent group (say $\{A_2\}$) to wait or move forward to $v_3$. Following the recursion again, the agent group $\{A_1\}$ has to split into smaller groups to move forward among $v_4$ and $v_5$. Similarly, whenever $\{A_2\}$ moves forward along $(v_1,v_2)$, they have to split off to mover forward among $v_6$ and $v_7$ and this continues recursively.

As the size of the groups continue decreasing, and since $d<n/2$, there would be a round $r$ at which every agent, group is of size 1, visits a new parent node. From this point onwards, none of these agents, can move further. It is because, the degree of contaminated edges at parent nodes is 2 each, but a single agent is guarding them. Hence, $\mathcal{A}$ fails, which leads to a contradiction.\qed\end{proof}  

\begin{remark}
    Here in Lemma \ref{lem:tree_lb} we took the case of $k=0$. In fact, we can increase $k$ by connecting children up to $d-1$ bounded by the result in Lemma \ref{sec:wheel-lower}. We can still prove Lemma \ref{lem:tree_lb} with some modifications.
\end{remark}

\section{Algorithm for Network Decontamination}\label{sec:algo_decon}

In this section, we propose network decontamination algorithms for both models of dynamicity. We start with the \emph{finite edge reappearance model}.

\subsection{Finite Edge Reappearance Model}
\label{sec: Algorithm FiniteEdgeRe}

Under this section, we propose two monotone decontamination algorithms, first for the case $k\ge n$ and next for the case $k<n$.
\subsubsection{Case ($k\ge n$):} The following algorithm (termed as \textsc{Uni-Decontamination}) discusses a universal monotone decontamination strategy for the case $k\ge n$, where $\mu(G)=k$, $|V|=n$ and $G=(V,E)$ is the underlying static graph of the dynamic graph $\mathcal{G}$.

\noindent{\textbf{High-Level Idea of \textsc{Uni-Decontamination}:}} We present a deterministic strategy satisfying monotonic decontamination of a dynamic graph $\mathcal{G}$ using $n$ agents. The agents are initially co-located at a node, termed as $Home$. The algorithm proceeds in two main phases: the first phase is called \emph{dispersal} and the second phase is called \emph{cleaning}. In the \emph{dispersal} phase, the agents starting from $Home$ perform \textit{breadth first search} (or BFS). The moment a group of agent arrive at an empty node (i.e., not already occupied by any other agent), the lowest ID agent settles, and the rest continue to perform BFS. These stationary agents serve as guards, ensuring that once a vertex is cleaned, it cannot be recontaminated due to changes in the dynamic edge set. During this strategy, whenever an agent finds that the edge towards its destination node has disappeared, it waits for that edge to reappear. Now, once the $Home$ node is \emph{fully decontaminated}, i.e., $Home$ and its adjacent edges are decontaminated, then the agent at $Home$, say $a_1$, acts as a \emph{cleaner} agent and starts the second phase of the algorithm.

In the cleaning phase, $a_1$ begins to explore the network to identify and clean every remaining contaminated edge. The moment it encounters a disappeared contaminated edge, it waits for that edge to reappear, i.e., for at most $T$ times (refer to dynamic graph model in Section \ref{sec:model}). So, eventually every edge and vertex of $\mathcal{G}$ gets decontaminated, and the algorithm also maintains monotonicity.

\begin{algorithm2e}[h]\small
\caption{\textsc{Uni-Decontamination}($k,Home$)}
\label{algo:uni_decon}
\KwIn{$Home$ node, $n$ agents}

Initially $n$ agents are at $Home$\;
%Separator agents occupy the $\Sigma$-vertices\;
We perform a BFS search starting from $Home$
    \tcp{While encountering a node during exploration}
    \For{$v \in V$}{
        \If{$A_t(v) = 0$}
            {
            Place the lowest ID agent on $v$\;
            \tcp{$A_{t}(v)$ indicates the number of agents at $v$ at time $t$}
            Remaining agents continue exploration\;
            }
    }
\If{$C_{t'}(Home) = 0$}
{
\tcp{Can be figured out by the stationary agent at $Home$}
The initial agent (say $a_1$) at $Home$ moves forward, acting as a cleaner agent for the edges.
}
\end{algorithm2e}

\begin{theorem} \label{n result}
Let $G=(V, E)$ be the initial underlying static graph of the dynamic graph $\mathcal{G}$ on $\lvert V\rvert = n$ vertices, then $n$ initially co-located agents are sufficient to achieve monotone decontamination of $\mathcal{G}$.
\end{theorem}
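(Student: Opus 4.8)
The plan is to analyze \textsc{Uni-Decontamination} phase by phase, carrying throughout a single global invariant: \emph{every node that is incident to a decontaminated edge, or that is itself a decontaminated vertex, hosts a stationary (settled) agent}. If this invariant holds at every round, then there is never an unguarded path joining a contaminated component to a decontaminated one, so monotonicity is immediate; the remaining content of the proof is progress (eventual full decontamination) and termination, for which the finite‑reappearance bound $T$ of the FTEA model is the crucial ingredient.

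First I would handle the \emph{dispersal} phase and show that after finitely many rounds each of the $n$ nodes carries exactly one settled agent. The argument is an induction on BFS layers: 1‑interval connectivity guarantees that the footprint is connected every round, so the BFS frontier is never empty while unvisited nodes remain, and the bound $T$ guarantees that an agent whose intended BFS move uses a momentarily absent edge waits at most $T$ rounds before that edge returns, hence no agent is delayed forever. Since $n$ agents start at $Home$ (one, $a_1$, settling there) and exactly one agent settles at the first visit to each further node, the counts match and the phase ends with all $n$ nodes occupied. Monotonicity during dispersal is precisely the invariant above: the first agent to reach a node settles and never leaves, so whenever a travelling agent departs a node $v$, either all incident edges of $v$ are already clean or $v$ already holds its guard; and a traversed (clean) tree edge $(u,v)$ has both endpoints guarded from the moment of traversal onward. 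Because every boundary node of the clean region is therefore guarded, an adversarial edge that vanishes and later reappears can only reconnect two already‑guarded nodes and cannot open an unguarded contaminated‑to‑clean path.

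Next, the \emph{cleaning} phase. Once dispersal is complete every node carries a guard, so from that point onward \emph{no} recontamination is possible anywhere at all, irrespective of how agents subsequently move, because no unguarded vertex — hence no unguarded path — exists. In the BFS the agents leaving $Home$ probe all ports of $Home$, so all edges incident to $Home$ are traversed and $Home$ is fully decontaminated; at that moment $a_1$ may leave $Home$ without harm (all of $Home$'s neighbours are still guarded, so $Home$ is not recontaminated as a vertex). I would then argue that the roaming cleaner $a_1$ traverses every still‑contaminated edge: the set of such edges is a subset of the feedback edge set and thus has size at most $k$, the footprint minus any single currently absent edge is still connected so $a_1$ can always route toward a chosen dirty edge, and whenever the target edge is momentarily absent $a_1$ waits at most $T$ rounds by the FTEA property. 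Hence in finite time every edge is traversed and clean, while all vertices were already cleaned during dispersal; combined with monotonicity this proves that $n$ co‑located agents suffice.

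The step I expect to be the main obstacle is the monotonicity analysis of the dispersal phase against the adversary's edge dynamics — namely, proving rigorously that a reappearing edge can never serve as a recontamination bridge — since this is exactly what forces the ``guard every boundary node'' discipline and, in turn, explains why $n$ agents are spent. A secondary delicate point is the release of $a_1$: one must verify that $Home$ is genuinely \emph{fully} decontaminated (all its incident edges, not merely the BFS‑tree edge it used) before $a_1$ becomes a cleaner, and that $a_1$'s subsequent wandering with its $\le T$‑round waits never stalls and eventually covers all remaining contaminated edges.
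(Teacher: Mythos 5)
Your proposal is correct and follows essentially the same route as the paper's proof: a BFS dispersal phase that settles one guard per node (using the bound $T$ to rule out indefinite waiting), followed by releasing the $Home$ agent $a_1$ as a cleaner once $Home$ is fully decontaminated, with monotonicity secured because every decontaminated vertex/edge has its (boundary) nodes guarded. Your explicit global invariant and the observation that the leftover dirty edges form a subset of a feedback edge set are just slightly more structured restatements of the paper's argument, not a different approach.
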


\begin{proof}
    We assume that all agents begin co-located at a designated node, say $Home$, and that the graph $\mathcal{G}$ remains connected at every round (thus satisfying 1-interval connectivity). Using this assumption, we deploy $n-1$ agents using the \emph{Breadth First Search} strategy, which involves visiting all the 1-hop neighbors from $Home$, followed by all the 2-hop neighbors from $Home$, and so on, while keeping an agent stationary at each of these visited nodes.

Once each 1-hop neighbor of $Home$ gets an agent each, the node $Home$ gets \emph{fully decontaminated}. Next, according to our algorithm, the $Home$ agent, denoted as $a_1$ (considered the cleaner agent), starts visiting one node at a time in $G$ and fully decontaminates it by decontaminating each of its adjacent edges. While doing so, it encounters a contaminated disappeared edge from the current node, it waits for its reappearance and then decontaminates it. If multiple such disappeared edge exists, then it starts decontamination in increasing order of port numbers from the current node. Note that, once a node (say, $v\in V$) is fully decontaminated, it preserves this property. If $v=Home$, then all its adjacent nodes are already guarded by an agent each, and each adjacent edges are also decontaminated by those guarding agents during the BFS movement. So, further recontamination of either adjacent edges or $Home$ is not possible. If $v\neq Home$, then as well, all adjacent nodes of $v$ are already guarded by an agent each, so recontamination of either the adjacent edges or adjacent node, or $v$ is not possible. 

This shows that, $n$ initially co-located agents, executing the algorithm \textsc{Uni-Decontamination} is sufficient to fully decontaminate $\mathcal{G}$, in the \emph{finite edge reappearance} model.
\qed
\end{proof}

% Proof of the above theorem is in Appendix \ref{Appendix:Uni-Decon}.

\noindent\textbf{Time Complexity:} The algorithm \textsc{Uni-Decontamination} takes time proportional to the size of the graph, plus the waiting delays caused by the finite time disappearing edges. The BFS strategy after incorporating the delays due to the disappearing edge requires $O(T\cdot (V+E))$ time, where $G=(V,E)$ is the underlying static graph, $T$ is the maximum time for which any edge can be disappeared.

 In the cleaning phase, the cleaner may need to wait at each vertex for up to $(\Delta-1)T$, where $\Delta$ is the maximum degree of $G$. For all vertices, this adds up to $O(V\cdot \Delta \cdot T)$. Thus, the total running time is $O(T\cdot (V+E)+V\cdot \Delta\cdot T)=O(n^2T)$, where $|V|=n$.

\begin{remark}
This trivial bound of $n$ agents is only meaningful as a universal upper bound.  In the remainder of this section, we focus on more refined results for the case $k \ll n$, where $k$ measures the number of dynamic threats (e.g., chords or feedback edges) in the underlying static topology.
\end{remark}

\begin{remark} 
Even if we are given the condition that the agents are unable to figure out whether the neighbouring nodes are contaminated or not, it doesn't matter. As the edge-reappearing time is finite, the agent on a node $v$ can simply wait until all the edges appear once and check them.
\end{remark}

\subsubsection{Case ($k<n$):} In this section, we discuss a monotone decontamination strategy for the case $k< n$, termed as \textsc{Modified-Decontamination}, where $\mu(G)=k$, $|V|=n$ and $G=(V,E)$ is the underlying static graph of the dynamic graph $\mathcal{G}$.

\noindent{\textbf{High-Level Idea of \textsc{Modified-Decontamination}:}} We now present an algorithm \textsc{Modified-Decontamination} that proposes to solve a monotonic decontamination of any arbitrary unknown graph $G$, with only $d+k
~(k\geq 1)$ agents where $k = \mu(G)$ is the cyclomatic number and $d$ is the diameter of the graph. So, if lets say $x$ (where suppose $x$ is sufficiently large) many agents are present, then we claim that among them only $d+k$ agents are sufficient to solve monotone decontamination, using \textsc{Modified-Decontamination} strategy. Unlike the previous strategy, this approach does not require that each vertex be permanently guarded. Instead, this method relies on a comparatively small team of agents working together to clean the entire graph while ensuring that no cleaned area ever gets recontaminated. The idea is: all agents begin at a certain root vertex (termed as $Home$), which is immediately decontaminated and guarded. The decontaminated area increases as agents traverse the network. At every step, a few agents stay behind to guard the boundary between the cleaned and contaminated parts of the graph. The set of these boundary nodes, are defined to be as the \emph{separator} nodes, and the agents guarding these nodes by remaining stationary are termed as \emph{separator} agents. At each separator vertex, the agent stationed there identifies the contaminated incident edges, including those that have temporarily disappeared but may reappear in the future. It may be noted that, a separator agent only remains stationary at a separator node, say $v$, if $C_t(v)>1$, i.e., when it finds the degree of adjacent contaminated edges (including the disappeared contaminated edges) is more than 1. Otherwise, for $C_t(v)=1$, it continues to sweep through, i.e., in turn decontaminating the separator node. These separator agents prevent contamination from spreading back into the clean zone, even if edges reappear due to the dynamic nature of the graph. For the case, when $C_t(v)>1$, the separator agent waits till $t'>t$, when $C_{t'}(v)=1$.

Meanwhile, the free agents (i.e., the ones that are not guarding any separator node as a separator agent) sweep through the graph carefully exploring and cleaning new vertices and edges, until they encounter a separator vertex, $u$ say, with $C_{t''}(u)>1$. The main task of all the agents, starting from $Home$ is to reduce the separator nodes in the graph, until it becomes null. In order to move along the graph, they perform \textit{depth first search} (DFS) traversal. During this traversal, whenever an agent encounters a node, which is fully decontaminated (i.e., the node and all its neighbors are decontaminated), then it backtracks.

 In the following part, we show that \textsc{Modified-Decontamination} monotonically decontaminates the dynamic graph $\mathcal{G}$. 

\begin{algorithm2e}[h]\small
\caption{\textsc{Modified-Decontamination}($k,Home$)}
\label{algo:mod_decon}
\KwIn{$Home$ node, $d+k$ agents}

Let $\Sigma$ denote the separator vertices in $\mathcal{G}$\;
Set $\Sigma=V$\;
Initially $d+k$ agents are at $Home$\;
%Separator agents occupy the $\Sigma$-vertices\;
    \For{$v \in \Sigma$}{
    \If{$C_t(v)>1$}
    {
    Choose the lowest port among them, say that leads to $u\in N(v)$\;
        \tcp{Underlying traversal strategy is a DFS algorithm}
        
        Lowest Id agent remains, till $t'$ when $C_{t'}(v)=1$\;
        
        Remaining agents move through the edge $(v,u)$\;
        \tcp{Since number of contaminated (unexplored) edges decrease, this process is finite for a fixed $v$}
        }
        \uElseIf{$C_t(v)=1$}
        {
        The stationary agent at $v$, moves along $(v,u)$\;
        $\Sigma=\Sigma - \{v\}$\;
        }
        \Else
        {
        All agents from current node, backtracks from this node\;
        }
    }
\end{algorithm2e}

%****************************************

\begin{theorem}\label{d+k result}
Let $ \mathcal{G} \subseteq G $ be a dynamic connected graph of the static graph $ G $, with cyclomatic number $ \mu(G) = k $ and diameter $d$. Then, $ d+k $ agents are sufficient to perform monotone decontamination starting from any initial node.
\end{theorem}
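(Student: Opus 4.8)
The plan is to prove the theorem in two parts: first, that the $d+k$ agents are always enough to keep a guard at every separator vertex while the team advances (so monotonicity is never violated), and second, that the separator set $\Sigma$ is driven to the empty set in finite time (so the whole graph is eventually cleaned). I would begin by fixing a spanning tree $T$ of the footprint $G$ together with its feedback edge set $\tilde E$ with $|\tilde E| = k$. The DFS traversal in Algorithm~\ref{algo:mod_decon} is conducted along $T$, so at any round the set of ``frontier'' tree edges separating the cleaned zone from the contaminated zone forms an antichain in the rooted tree; since the depth of $T$ is at most $d$, this frontier has at most $d$ vertices that need guarding on account of tree edges. On top of that, each of the $k$ feedback edges can, when it (re)appears, contribute at most one extra separator vertex that is not already on the tree frontier. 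Hence at any round at most $d+k$ vertices have contamination degree $>1$ and therefore need a stationary separator agent; the remaining agents (possibly none) are the free sweeping party. This is the invariant I would state and then prove by induction on rounds, checking that it is preserved across each of the three branches of the \textbf{For} loop.

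Next I would argue monotonicity directly from the invariant. A cleaned vertex $v$ can be recontaminated only along a contaminated incident edge that is unguarded; but by construction every vertex with $C_t(v) > 1$ carries a separator agent, and a vertex with $C_t(v) = 1$ is precisely the case the algorithm handles by sweeping through it (decontaminating it) rather than leaving it exposed. The subtle point is the dynamic one: an edge may be disappeared at the moment a separator agent would like to leave, or a contaminated edge may reappear after a vertex looked clean. I would handle this by appealing to the fact that the separator agent at $v$ counts disappeared-but-contaminated edges toward $C_t(v)$, so it does not leave until genuinely all but one incident contaminated edge has been dealt with, and — crucially — by using the FTEA assumption that every disappeared edge reappears within $T$ rounds, so no separator agent waits forever and no contaminated edge can ``hide'' past the moment the sweep reaches it.

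For the progress/termination part, I would use a potential-function argument: let $\Phi_t$ be the number of contaminated edges at round $t$ (equivalently, $|E|$ minus the number of decontaminated edges). Each completed step of the sweeping party through a new tree edge strictly decreases $\Phi$, the agent that was waiting at a separator vertex with $C_t(v)>1$ eventually sees $C_{t'}(v)=1$ after finitely many reappearances (the inner comment in the algorithm notes this: the count of contaminated/unexplored incident edges is monotonically non-increasing once the sweep has passed), and the backtracking branch fires only at fully-decontaminated nodes, so it cannot loop forever on an underlying finite tree. Since $\Phi$ is a non-negative integer that cannot increase (monotonicity, just proved) and strictly decreases at least once in every bounded window of rounds, it reaches $0$ in finite time, at which point $\Sigma = \emptyset$ and every vertex and every existing edge of $\mathcal G$ is decontaminated.

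The main obstacle I anticipate is the counting argument for the invariant ``at most $d+k$ separator vertices simultaneously.'' The clean claim ``$d$ from the tree frontier, $k$ from feedback edges'' is intuitive but needs care: one must verify that a feedback edge never forces \emph{two} new guards, that a vertex which is a tree-frontier guard and a feedback-edge endpoint is not double counted, and that the DFS discipline (backtracking only from fully-decontaminated nodes) genuinely keeps the tree frontier an antichain of size $\le d$ rather than letting it balloon as the search explores multiple subtrees. Getting this bookkeeping exactly right, in the presence of edges that vanish and return, is where the real work lies; the monotonicity and termination arguments are comparatively routine once the invariant is in hand.
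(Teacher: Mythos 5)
Your overall architecture matches the paper's (fix a spanning tree and its $k$ feedback edges, maintain guards at separator vertices, argue monotonicity, then termination via a decreasing potential), but the step that actually carries the theorem --- the claim that at most $d+k$ vertices ever need a guard simultaneously --- is not established, and the justification you sketch for it is incorrect as stated. You argue that the frontier of tree edges between the clean and contaminated zones ``forms an antichain in the rooted tree; since the depth of $T$ is at most $d$, this frontier has at most $d$ vertices.'' The size of an antichain in a rooted tree is not bounded by its depth (a star has depth $1$ and $n-1$ leaves); depth bounds a \emph{chain}, i.e.\ a root-to-leaf path. What the DFS discipline of Algorithm~\ref{algo:mod_decon} actually gives you is that the guarded vertices with $C_t(v)>1$ lie along the current DFS trail from $Home$ (a chain), and it is the length of that trail you would need to bound by $d$; but the depth of a DFS exploration of a graph with diameter $d$ can greatly exceed $d$, so even the corrected statement requires an argument (this is exactly the bookkeeping the paper attempts in Lemma~\ref{lem:full_decon}, where it counts at most $d$ guards along the ``diameter path from $Home$'' plus $k$ guards charged to feedback edges, combined with the visitation claim in Lemma~\ref{lem:full_visit}). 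You yourself flag this counting as ``where the real work lies'' and leave it open, so the proposal proves monotonicity and termination conditional on an invariant it never delivers --- and that invariant is precisely why $d+k$ agents suffice rather than the trivial $n$.

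A secondary point: your termination argument leans on the FTEA guarantee that a disappeared edge returns within $T$ rounds, which is legitimate here, and your potential-function argument for $\Phi_t$ (number of contaminated edges) is cleaner and more explicit than the paper's recursion on contamination degrees; if you repair the counting invariant, that part of your write-up would be a genuine improvement. But as it stands, the missing charge argument (each feedback edge accounts for at most one extra guard, no double counting with the tree trail, and the trail of simultaneously guarded vertices has length at most $d$) is a real gap, not a routine detail.
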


\begin{proof}
Fix a spanning tree $ T=(V,E') \subseteq G $ of the static graph $ G $. Let the set of \emph{feedback} edges be $ \Tilde{E} = E \setminus E' $, with $ |\Tilde{E}| = \mu(G) = k $. These feedback edges are the only ones that can introduce cycles and potential recontamination, due to the dynamicity.

Our algorithm \textsc{Modified-Decontamination} works by placing all $ d+k $ agents at the initial vertex $ Home $. At this point, the collection of all separator vertex is $\Sigma=V$. Next, we will show that following invariant holds:

\begin{enumerate}
    \item  \textbf{Inv-1:} Whenever a node, say $v\in V$ is visited for the first time by an agent at round $t$, our algorithm ensures that within the time interval $[t,t']$ (where $t<t'$), the node $v$ is \emph{fully decontaminated}.

    \item \textbf{Inv-2:} Every node in $G$ is visited by at least one agent, during the execution of our algorithm.
\end{enumerate}

Proof of these invariants follows from Lemma \ref{lem:full_decon} and \ref{lem:full_visit}.

So, \textbf{Inv-1} and \textbf{Inv-2} will only mean that the graph is fully decontaminated if monotonicity is preserved. 

We claim that our algorithm preserves monotonicity. Let $v$ be a decontaminated node, then there is no possibility of $v$'s recontamination, as it is not left unguarded by an agent until $v$ gets fully decontaminated. This also guarantees that, there is no possibility of recontamination in a fully decontaminated node. It is because, any path of recontamination must pass through at least one decontaminated node. Lastly, by the same argument, we can conclude that any decontaminated edge does not gets recontaminated. Hence, monotonicity holds in our algorithm. \qed\end{proof}

%\subsubsection{Correctness and Complexity}

\begin{lemma}\label{lem:full_decon}
Whenever a node, say $v\in V$ is visited for the first time by an agent at round $t$, our algorithm ensures that within the time interval $[t,t']$ (where $t<t'$), the node $v$ is \emph{fully decontaminated}.
\end{lemma}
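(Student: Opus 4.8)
\textbf{Proof plan for Lemma \ref{lem:full_decon}.}
The plan is to trace the behaviour of the agents from the first arrival at $v$ and argue that $v$ cannot leave a ``partially decontaminated'' limbo. First I would fix the round $t$ at which some agent group first reaches $v$ via an edge $(u,v)$ traversed in the DFS traversal; that traversal step decontaminates $(u,v)$, and $v$ becomes a decontaminated node with some set $\{C_t(v)\}$ of contaminated incident edges, possibly including temporarily disappeared ones. The key observation is the way the algorithm treats $v$ as an element of $\Sigma$: as long as $C_t(v)>1$, the lowest-ID agent in the current group stays put at $v$ (so $v$ is guarded and never recontaminated), and the remaining agents push out along the lowest available contaminated port, recursively cleaning. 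I would then invoke the ``number of contaminated (unexplored) edges strictly decreases'' comment in Algorithm \ref{algo:mod_decon}: each time the free agents return to $v$ (by the backtracking rule, they return once the subtree hung off that port is fully decontaminated) the count $C(v)$ has dropped by at least one, so after finitely many such excursions we reach a round $t'$ with $C_{t'}(v)\le 1$. At that point the $\Sigma$-rule with $C_{t'}(v)=1$ fires: the guard at $v$ sweeps along the last contaminated edge, decontaminating it, and now $v$ has no contaminated incident edge, i.e. $v$ is fully decontaminated within $[t,t']$.

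The two places that need care are (i) the disappeared-edge bookkeeping and (ii) the guarantee that the recursion actually terminates and actually returns to $v$. For (i), I would lean on the FTEA assumption: a disappeared contaminated edge at $v$ reappears within $T$ rounds, so a separator agent that is waiting for $C_t(v)$ to shrink only ever waits a bounded time before it can either guard the reappeared edge or (when only one contaminated edge remains) sweep across it; crucially, the agent counts a disappeared-but-not-yet-reappeared contaminated edge toward $C_t(v)$, so it will not prematurely declare $v$ clean and walk away. For (ii), termination follows because along any DFS excursion from $v$ the agents only advance along contaminated edges and backtrack from fully decontaminated nodes, so each edge of the (finite) footprint is traversed at most a bounded number of times; hence every excursion ends, the agents come back to $v$, and $C(v)$ drops. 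I would also note the subtle point that one excursion from $v$ may clean several of $v$'s incident edges at once (if the far endpoint's cleaning loops back to $v$ through a feedback edge), which only helps.

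The main obstacle, I expect, is making the recursion/termination argument fully rigorous without circularity: Lemma \ref{lem:full_decon} is about a single vertex $v$, but the excursions from $v$ visit other vertices $w$ and implicitly rely on those $w$'s becoming fully decontaminated too, which is the same lemma applied to $w$. The clean way to handle this is an induction on the number of contaminated edges remaining in $\mathcal{G}$ at round $t$ (a quantity that strictly decreases with every traversal of a fresh contaminated edge), rather than an induction on vertices: when the free agents leave $v$ along $(v,u)$, the residual contaminated-edge count is strictly smaller, so by the induction hypothesis the excursion terminates with $u$ (and everything reachable from it through then-contaminated edges) fully decontaminated, after which the agents backtrack to $v$. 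Wrapping the single-excursion claim inside this global decreasing measure is the step I would spend the most effort getting right; everything else is the bounded-wait argument from the FTEA model plus the explicit $\Sigma$-update rules in Algorithm \ref{algo:mod_decon}.
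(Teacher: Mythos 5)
Your plan captures the local mechanics at $v$ correctly (guard stays while $C_t(v)>1$, excursions return and decrease $C(v)$, the sweep rule fires when $C(v)=1$), but it has a genuine gap: nowhere do you use the fact that there are $d+k$ agents, and the lemma is simply false with fewer agents (the paper's own lower bounds, Lemma~\ref{lem:tree_lb} and the example after Theorem~\ref{d+k result}, exhibit exactly the failure mode). The weak point is your claim that ``every excursion ends, the agents come back to $v$, and $C(v)$ drops.'' An excursion can also end by attrition: each newly reached separator node $w$ with $C(w)>1$ pins one agent, so the advancing group shrinks, and if it shrinks to nothing before any node on the excursion becomes fully decontaminated, no agent ever backtracks to $v$ and $C(v)$ never decreases --- the process deadlocks with every agent stuck guarding a node of contamination degree at least $2$. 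Your induction on the total number of contaminated edges does not repair this, because that measure only decreases when some agent is free to traverse a fresh contaminated edge, which is precisely what must be established.

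The paper's proof, informal as it is, is built around exactly this missing ingredient: it counts how many agents can simultaneously be pinned, arguing that the stationary guards lie along a DFS/diameter path from $Home$ (at most $d$ of them, indexed by the position $j$ of $v$ on that path) plus at most $k$ guards attributable to feedback edges, so with $d+k$ agents the frontier group is never exhausted; only then does it apply the backtracking recursion $C_{t_1'}(u)\le C_t(u)-1$ that you also use. To fix your proposal you would need to add an invariant of the form ``at any time the set of nodes holding a mandatory stationary separator agent lies on the current DFS stack (length at most $d$) together with endpoints charged to at most $k$ feedback edges,'' and conclude that at least one agent is always free to advance or backtrack; your decreasing-measure argument then correctly supplies termination and the return-to-$v$ step.
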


\begin{proof}
In order to prove this lemma, based on the number of adjacent contaminated edges with respect to $v$ at time $t$, we have the following cases:

\textbf{Case-1 $[C_t(v)>1]$ :} Let $v$ be $j$-th node in the diameter of $G$. We consider two instances, first when $k=0$ and second when $k\ge 1$. 

If $k=0$, then in the worst case $j-1$ agents are already guarding the $j-1$ nodes in the diameter path from $Home$.
Proceeding further in the trail, we (in worst case) encounter $d-j$ many nodes where we keep 1 agent on each node. Since the agent can't proceed further, it backtracks to a node $u$, with $C_{t_1}(u)$ where $t_1$ ($t<t_1<t'$) is the time $u$ was first visited by an agent. Following the algorithm, one agent $a_u$ was placed at $u$. The \emph{backtracking} agent reaches $u$ at time $t'_1$. Now, $C_{t'_1}(u)\leq C_t(u)-1$. Recursively, there exists a time $t''_1$ ($\ge t'_1$) when $C_{t''_1}(u)=0$. This holds for every node in $G$, hence it is sufficient with $d$ agents to fully decontaminate $G$. 

If $k\ge 1$, by similar idea suppose in the worst case $j+i-1$ agents are already guarding the $j-1$ nodes in the diameter path from $Home$, and $i$ many agents are guarding $k-i$ feedback edges. In this situation, proceeding along the trail, in the worst case, the group of agents may encounter $d-j$ diameter nodes where an agent must be kept, and $k-i$ feedback edges, where remaining $k-i$ agents are kept. Since, the agent last kept on the farthest trial from $Home$ cannot move further, it backtracks, to a node $u$. By earlier argument, $C_{t''_1}(u)$ becomes 0. Hence, recursively we can argue that $G$ will be fully decontaminated with $d+k$ agents.

\textbf{Case-2 $[C_t(v)=1]$ :} The agent reaching $v$ at time $t$, moves along the decontaminated node, making $C_{t+1}(v)=0$, hence fully decontaminating $v$.
\par

So, inevitably we show that there exists a time $t'>t$, when $C_{t'}(v)$ becomes 0. Hence, this proves the theorem.\qed\end{proof}

\begin{lemma}\label{lem:full_visit}
    Every node in $G$ is visited by at least one agent, during the execution of our algorithm.
\end{lemma}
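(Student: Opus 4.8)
The plan is to argue that the DFS-style traversal underlying \textsc{Modified-Decontamination} eventually reaches every vertex of $G$, using the fact that the footprint graph is connected and the number of agents ($d+k$) is chosen exactly so that the traversal never deadlocks before exhausting the graph. First I would set up the invariant that the set of visited (and, by Lemma~\ref{lem:full_decon}, fully decontaminated) vertices always induces a connected subgraph containing $Home$, and that at any round the \emph{frontier} between this visited region and the rest of $G$ is guarded by separator agents — at most $k$ of them on feedback edges plus the agents strung along the current DFS branch, the latter numbering at most $d$ since every root-to-frontier path in the spanning tree $T$ has length at most the diameter. Hence the free agents are never all consumed by guarding duties, so at least one free agent is always available to advance the DFS whenever an unexplored edge leaves the frontier.

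Next I would run the standard DFS termination/coverage argument adapted to the dynamic setting. Suppose for contradiction that some vertex of $G$ is never visited. Since $G$ is connected, there is then an edge $(x,y)\in E$ with $x$ visited and $y$ never visited; because $\mathcal G$ is $1$-interval connected and disappeared edges either reappear within $T$ (FTEA) or, in the IDED analysis elsewhere, are handled by the $k$-guard budget, the edge $(x,y)$ is present at infinitely many rounds, and whenever it is present $x$ is a separator vertex with $C_t(x)\ge 1$. The algorithm's rule for a node $u$ with $C_t(u)=1$ (the \uElseIf branch) forces the stationary agent to sweep through the unique remaining contaminated edge, and the rule for $C_t(u)>1$ keeps an agent there until the count drops to $1$ and then sweeps; in either case $x$ cannot stay permanently on the frontier with $(x,y)$ uncleaned unless the traversal has stalled for lack of a moving agent — which the counting argument of the previous paragraph rules out. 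Thus the DFS must eventually traverse $(x,y)$ and visit $y$, a contradiction.

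I would close by noting that since the DFS only ever backtracks from a node that is already fully decontaminated (all neighbours visited), and it never abandons the frontier unguarded, the traversal terminates only when $\Sigma=\varnothing$, i.e. when there are no contaminated edges left, which by connectivity can happen only after every vertex has been visited. The main obstacle I expect is making the no-deadlock counting rigorous: one must show that the worst-case simultaneous demand for stationary guards — one per spanning-tree edge on the active DFS path (at most $d$) plus one per not-yet-cleaned feedback edge (at most $k$) — never strictly exceeds $d+k$, and in particular that feedback edges and deep DFS branches cannot both be maximally costly at the same instant in a way that double-counts an agent; carefully pairing each guarding obligation with a distinct agent (as sketched in the proof of Lemma~\ref{lem:full_decon}, Case-1, $k\ge 1$) is the delicate point. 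Once that accounting is pinned down, the coverage claim follows from the classical property that DFS on a connected graph visits all vertices.
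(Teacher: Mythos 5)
Your overall skeleton --- assume some vertex is never visited, use connectivity of $G$ to obtain an edge with one endpoint visited and the other unvisited, and derive a contradiction --- matches the paper's. But where the paper closes the argument in a single step by invoking Lemma~\ref{lem:full_decon}, you instead try to re-derive deadlock-freedom from scratch via the frontier/agent-counting argument (at most $d$ guards on the active DFS branch plus at most $k$ on feedback edges), and you yourself concede that this accounting is not pinned down (``once that accounting is pinned down, the coverage claim follows''). That unfinished counting is precisely the load-bearing step of your argument: as written, the claim that the traversal never stalls rests on a fact you have not established, so there is a genuine gap. Moreover, that counting is exactly the content of Lemma~\ref{lem:full_decon} (its Case-1 analysis), so re-proving it inside this lemma duplicates work without adding anything.

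The repair is short and is what the paper does: suppose $v$ is never visited. Since $Home$ is visited and $G$ is connected, some edge of $E$ joins a visited node to $v$; let $u$ be such a neighbour of $v$, first visited at round $t$. Because $v$ is never visited, the edge $(u,v)$ is never traversed, hence is still contaminated, so $C_t(u)\ge 1$. Applying Lemma~\ref{lem:full_decon} to $u$ yields a round $t'>t$ at which $u$ is fully decontaminated; in particular $(u,v)$ has been decontaminated, which in this model requires an agent to traverse it and therefore to visit $v$ --- a contradiction. Note also that your detour about $(x,y)$ being present at infinitely many rounds and the IDED guard budget is out of scope here: this lemma sits in the FTEA case, where any missing edge reappears within $T$ rounds and the waiting is already absorbed into Lemma~\ref{lem:full_decon}.
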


\begin{proof}
    Let $v$ be a node, which is never visited by any agent, while executing the algorithm \textsc{Modified-Decontamination}. Let $u$ be an adjacent node of $v$, which is first visited by any agent at round $t$. Now, by the argument $v$ is never visited implies that $(u,v)$ is contaminated. So, this means $C_t(v)\ge 1$. By Lemma \ref{lem:full_decon}, there exists a round $t'>t$, at which $v$ gets fully decontaminated. Hence, this contradicts that the node $v$ is never visited. \qed\end{proof}

\begin{corollary}[Time Complexity]
\textsc{Modified-Decontamination} takes time proportional to the size of the graph, plus the waiting delays caused by the finite time disappearing edges. Incorporating those delays into DFS yields $O(T\cdot (V+E))$ time, where $G=(V,E)$ is the underlying static graph, $T$ is the maximum time for which any edge can disappear.
\end{corollary}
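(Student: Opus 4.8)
The plan is to reduce the running-time analysis to the standard cost of a depth-first traversal of $G$, and then inflate each elementary step by the worst-case delay $T$ forced by the finite-time edge reappearance model.

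First I would observe that the movement of the free-agent group in \textsc{Modified-Decontamination} is exactly a DFS of $G$: the group advances along the lowest available port, leaves a separator agent behind whenever the current node still has contamination degree $C_t(v)>1$, and backtracks as soon as it reaches a fully decontaminated node (by Lemma~\ref{lem:full_decon} this eventually happens at every node). Just as in the classical static analysis of DFS, each spanning-tree edge is crossed at most twice (once forward and once on backtrack) and each feedback edge is inspected only $O(1)$ times, so the free agents perform $O(|V|+|E|)$ elementary ``move across an edge'' operations in total. The key point that keeps this count unchanged in the dynamic setting is that when the group wishes to traverse a chosen edge that is currently absent it \emph{waits} for that edge rather than skipping it or deferring it; hence the DFS edge-accounting is literally the same as in the static case, and dynamicity only delays individual steps.

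Second I would bound the cost of a single step. By the defining property of the FTEA model, any disappeared edge must reappear within $T$ rounds, so an agent blocked on its chosen edge is stalled for at most $T$ rounds before it can proceed. Therefore each of the $O(|V|+|E|)$ edge operations costs at most $T+1 = O(T)$ rounds, which already yields the claimed $O(T\cdot(|V|+|E|))$ bound for the traversal.

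Third, and this is the step I expect to need the most care, I would argue that the idle waiting of the stationary separator agents is absorbed into the bound above rather than added to it. A separator agent parked at $v$ with $C_t(v)>1$ simply waits until $C_{t'}(v)=1$; the decrease of $C_t(v)$ is produced by the free-agent group cleaning the remaining contaminated edges incident to $v$ (including those temporarily absent, each of which it handles after at most $T$ extra rounds). All of that cleaning activity is part of the DFS traversal already charged above, and it runs concurrently with the idle waits, so the separator waits lengthen the schedule by at most a constant factor, and at worst by an additive $O(T\Delta)$ per node, which is dominated by $O(T\cdot(|V|+|E|))$. Combining the three observations gives the corollary; the only genuine subtlety is verifying that the interaction of backtracking with edge reappearances never forces an edge to be processed more than $O(1)$ times, which follows because the algorithm always waits out an absent edge instead of postponing it.
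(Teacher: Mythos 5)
Your argument is correct and follows essentially the same route as the paper, which justifies this corollary only by the same brief accounting: the underlying traversal is a DFS performing $O(|V|+|E|)$ edge operations, each of which can be stalled by at most $T$ rounds under the FTEA model, giving $O(T\cdot(|V|+|E|))$. Your extra observation that the separator agents' idle waits run concurrently with the cleaning traversal (so they are absorbed into, rather than added to, this bound) is a detail the paper leaves implicit, but it does not constitute a different approach.
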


In the following example, we show a graph $G$ in which $d+k-1$ agents using algorithm \textsc{Modified-Decontamination} cannot decontaminate $G$.

\begin{example}
We see, for any \emph{monotone} decontamination procedure starting at root (red squared), we need to keep 1 agent at each of the $d-1$ nodes, and need at least $k+1$ agents to decontaminate the $k$-cycle connected to the basepoint, with a strategy being sending an agent to each of the $k$ nodes and using 1 as an explorer. Therefore, total number of agents required for \emph{monotone decontamination} of graph in Figure \ref{fig:ub_tree_example} is $(d-1) + k + 1 = d+k$.

\begin{figure}[h!]
  \centering
\begin{tikzpicture}[scale=0.75,
    every node/.style={circle, draw, thick, inner sep=2pt, fill=white}
]

% ========== TOP NODE + external leaves ==========
\node[circle, draw, thick] (a) at (0,0) {};

\node[draw=red, rectangle, fit=(a), inner sep=4pt, fill opacity=0, label=left:{$1$}] {};

\node (aL1) at (0.5,-0.5) {};
\node (aL2) at (0, -1.5) {};
\node (aR1) at (0.5,0.5) {};
\node (aR2) at (0,1.5) {};
\draw[thick] (a)--(aL1);
\draw[thick, dashed] (aL1)--(aL2);
\draw[thick] (a)--(aR1);
\draw[thick, dashed] (aR1)--(aR2);

% Vertical chain nodes
\node[label=above left:{$1$}] (b) at (1,0) {};
\node[label=above left:{$1$}] (c) at (2,0) {};
\node[label=above left:{$1$}] (d) at (4,0) {};

% draw vertical edges
\draw[thick] (a)--(b);
\draw[thick] (b)--(c);
\draw[thick, dashed] (c)--(d);

% External leaves for each vertical node
\node (bL1) at (1.5,1) {};
\node (bL2) at (1, 1.5) {};
\node (bR1) at (1.5,-1) {};
\node (bR2) at (1, -1.5) {};
\draw[thick] (b)--(bL1);
\draw[thick, dashed] (bL1)--(bL2);
\draw[thick] (b)--(bR1);
\draw[thick, dashed] (bR1)--(bR2);

% ========== Bottom triangle ==========
\node[label={[label distance = 5pt]above left:{$1$}}] (t) at (5,0) {};
\node (v1) at (6.75,1.5) {};
\node (v2) at (6,0) {};
\node (v3) at (6.75,-1.75) {};

% Extra leaves for t 
\node (tr) at (4.5,-1.25) {};
\draw[thick, dashed] (t)--(tr);
\node (tl) at (4.5,1.25) {};
\draw[thick, dashed] (t)--(tl);

% connect vertical chain to triangle top
\draw[thick] (d)--(t);
\draw[dotted, line width=1pt] (2,1) -- (4,1);
\draw[dotted, line width=1pt] (2,-1) -- (4,-1);

% Curly brace for d-1
\coordinate (a_shift) at (0,-2);
\coordinate (t_shift) at (5.2,-2);
\draw[decorate,decoration={brace,mirror,amplitude=8pt}] 
  (a_shift.south east) -- (t_shift.north east) node[draw=none, midway,yshift=-0.8cm] {$d-1$};

% connect triangle
\draw[thick] (t)--(v1);
\draw[->,thick,shorten <=5pt,shorten >=5pt] 
  (t) to[bend left=20] node[draw=none, midway, sloped, above=3pt] {1} (v1);
\draw[thick] (t)--(v2);
\draw[->,thick,shorten <=5pt,shorten >=5pt] 
  (t) to[bend right=20] node[draw=none, midway, sloped, right=20pt] {1} (v2);
\draw[thick] (t)--(v3);
\draw[->,thick,shorten <=5pt,shorten >=5pt] 
  (t) to[bend right=20] node[draw=none, midway, sloped, below=3pt] {1} (v3);

\draw[thick] (v1)--(v2)--(v3)--(v1);

% red wavy chords inside triangle
\draw[red, thick, decorate, decoration={snake, amplitude=2, segment length=6pt}] (v1)--(v2);
\draw[red, thick, decorate, decoration={snake, amplitude=2, segment length=6pt}] (v2)--(v3);
\draw[red, thick, decorate, decoration={snake, amplitude=2, segment length=6pt}] (v3)--(v1);

% Curly brace for C_k  
\coordinate (v3_shift) at (6.75,-2);
\draw[decorate,decoration={brace,mirror,amplitude=8pt}] 
  (t_shift.south west) -- (v3_shift.south east) node[draw=none, midway,yshift=-0.7cm] {$C_k$};
\end{tikzpicture}
\caption{The graph here represents a case of spanning tree where we need the exact \emph{upper bound} number of agents, i.e., $d+3$, where $d,\ k=3$. It consists of a path of $d-1$ nodes, branches which contain paths of $\leq d$ nodes; attached to $C_k$ graph where $k=3$.}
\label{fig:ub_tree_example}

\end{figure}
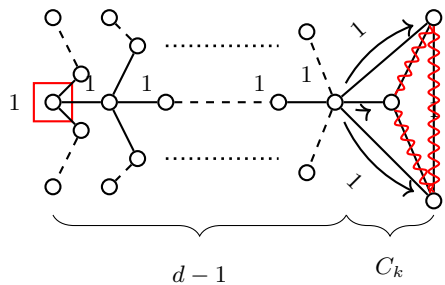

\end{example}

 % Further, the proofs of Lemmas  \ref{lem:full_visit} are deferred to Appendix \ref{Appendix:Modified-Decon}.

\subsection{Indefinite Edge Disappearance Model}
\label{subsec:Indefinite-Edge-Disappearance-Model}

In this section we propose a monotone decontamination algorithm, that requires $d+2k$ agents, initially co-located at a node of the underlying dynamic graph $\mathcal{G}$. We define $G=(V,E)$ to be the initial static graph of $\mathcal{G}$, with diameter $d$ and $\mu(G)=k$. We term the algorithm as \textsc{Infinite-Decontamination}.

\noindent\textbf{High Level Idea of \textsc{Infinite-Decontamination}}: We present the algorithm that proposes to decontaminate an underlying unknown dynamic graph $\mathcal{G}$ with $d+2k$ agents, where $k$ is the cyclomatic number of the initial static graph $G$ of $\mathcal{G}$, and $d$ is the diameter of $G$. In other words, if $x$ (where $x$ is sufficiently large) many agents are present at the initial node, then we claim that among them $d+2k$ are sufficient to solve monotone decontamination using the strategy \textsc{Infinite-Decontamination}.

Here, similar to the previous algorithm \textsc{Modified-Decontamination} we rely on a DFS strategy to decontaminate the graph. The exact strategy follows, but unlike the earlier strategy, there are certain changes. Let us suppose $v$ be the node where a set of $x$ agents arrive at time $t$. Based on this we can have the following cases: $C_t(v)\ge 1$ or $C_t(v)=0$.

If $C_t(v)=0$ and there are $M_t(v)$ many missing edges adjacent to $v$, where $\{M_t(v)\}$ indicates the collection of ports corresponding to these missing edges at $v$ at time $t$ and $M_t(v)$ indicates its cardinality. Then at most $M_t(v)$ agents remain at $v$, guarding these particular missing edges until they reappear. Remaining agents at $v$ find an alternate route to move along, since at any round $\mathcal{G}$ must be connected.

If $C_t(v)\ge \alpha$ (where $\alpha \ge 1$) and $M_t(v)$ many missing edges are adjacent to $v$. Then atmost $M_t(v)$ agents remain at $v$. Among these $M_t(v)$ edges, some may be contaminated and some may be decontaminated. The agents waiting for decontaminated missing edges wait until their respective missing edges appear. On the other hand, the agents waiting for contaminated missing edges not only wait until their respective edges appear but also until the time, say $t'$, when there exists one agent at $v$, that remains stationary for $v$ to be fully decontaminated.

\begin{algorithm2e}[H]\small 
\caption{\textsc{Infinite-Decontamination}($k,Home$)
\label{algo:inf_decon}}
\KwIn{$Home$ node, $d+2k$ agents}
%\KwOut{Monotone decontamination with $2(d+k)$ agents}

Initialize cleaned set $I = \{Home\}$\;
Place all $d+2k$ agents at $Home$ \;
Set $\Sigma=V$ \;
Set $\{C_0(v)\}\leftarrow$ all adjacent edges of $v$, and $C_0(v)=\delta_v$, for all $v$ \; 
\tcp{$\{C_t(v)\}$ indicates the set of all ports corresponding to which edges are contaminated}

Set $\{M_0(v)\}=NULL$\; \tcp{ $\{M_t(v)\}$ is the set of all ports corresponding to which the edges are missing at time $t$}

\For{$v \in \Sigma$}
{
    \tcp{Underlying traversal strategy is a DFS algorithm}
    \uIf{$C_t(v) \ge 1$}
    {
    Determine the set of $\{M_t(v)\}$, i.e., missing adjacent edges and keep at most that many agents on $v$ \;
    \tcp{These agents are for guarding the missing edges, they will advance when these reappear}
    \tcp{Checking number of visible contaminated edges,}
        \uIf{$|\{C_t(v)\} \cap (\{\delta_v\} \setminus \{M_v(t)\})| > 1$}
        {
        Place one more agent (the lowest ID) at $v$ \;
        Advance with the remaining agents to one of the \emph{visible contaminated} ports.
        \tcp{The agent left was for guarding the other edges from recontamination.}
        }
        \uElseIf{$|\{C_t(v)\} \cap (\{\delta_v\} \setminus \{M_v(t)\})| = 1$}
        {
        All remaining agents advance towards this one port\;
        }
        \Else
        {
        Since we have no other edges to explore, remaining agents search for a new route for further exploration\; \tcp{can be found as $\mathcal G_t$, i.e., dynamic graph of $\mathcal G$ at time $t$, is connected}
        }
    }
    \Else
    {
    We place atmost $M_t(v)$ agents on $v$ for the missing edges and proceed with the remaining agents to search for further exploration, either by backtracking or moving forward \;
    }
    Whenever the missing edges re-appear (at time $t'>t$), our agents situated at the vertex adjacent to them move through their corresponding edges to decontaminate or to explore them iff $A_{t'}(v) = M_{t'}(v)$ \;
    \tcp{$A_{t'}(v)$ indicates the number of agents at $v$ at time $t'$}
}
\end{algorithm2e}

Only if both these conditions hold, then they are allowed to move and decontaminate, after their respective edges has reappeared. 

Now, among the remaining $\delta_v-M_t(v)$ many adjacent edges, there are further three sub-cases: \textbf{SubCase-1}: none edge is contaminated, \textbf{SubCase-2}: a single edge is contaminated, \textbf{SubCase-3}: more than one edges are contaminated. 

\textbf{SubCase-1}: If no edge is contaminated, then all agents, except $M_t(v)$ many agents, find a route from $v$, since $\mathcal{G}$ is connected. 

\textbf{SubCase-2}: If a single edge is contaminated, then all the agents, except $M_t(v)$, move along this contaminated edge. 

\textbf{SubCase-3}: If more than one edges are contaminated, then the lowest Id agent (say $a_1$), among all the agents at $v$, except the $M_t(v)$ many waiting agents, remain stationary, and the remaining agents move along the lowest port labeled contaminated non-missing edge. The agent $a_1$ remains stationary at $v$ until either \textbf{SubCase-1} or \textbf{SubCase-2} occurs.

%The psuedo code is explained in Appendix \ref{}. Next, we proof the correctness.

\begin{theorem}\label{d+2k result}
    $\mathcal{G}$ be a dynamic graph, with underlying initial static graph $G=(V,E)$ with cyclomatic number $\mu(G)=k$ and diameter $d$. Our algorithm \textsc{Infinite-Decontamination} monotonically decontaminates $\mathcal{G}$ with $d+2k$ agents.
\end{theorem}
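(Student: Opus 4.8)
The plan is to mirror the analysis of \textsc{Modified-Decontamination} (Theorem~\ref{d+k result}) and to isolate the single extra ingredient forced by indefinite disappearances: a bound on how many agents can ever be ``locked'' guarding absent edges. Fix a spanning tree $T=(V,E')\subseteq G$ and let $\Tilde{E}=E\setminus E'$ be the feedback edge set, so $|\Tilde{E}|=\mu(G)=k$. As before, the feedback edges are the only edges that can create cycles carrying recontamination, and $T$ is the skeleton along which the DFS of \textsc{Infinite-Decontamination} advances. I would prove the same two invariants as in Theorem~\ref{d+k result}: (\textbf{Inv-1}) every node first visited at round $t$ becomes fully decontaminated at some round $t'>t$, and (\textbf{Inv-2}) every node of $G$ is eventually visited; monotonicity then follows exactly as there, since no decontaminated node is left unguarded before it is fully decontaminated, and any recontaminating path would have to cross such a guarded node.

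The key new lemma is a counting fact: at every round $t$ the graph $\mathcal{G}_t$ is connected on $n$ vertices, hence has at least $n-1$ edges; since $|E|=n-1+k$, at most $k$ edges of $G$ are absent at round $t$. Consequently, across the whole run, the agents parked as missing-edge guards (the ``$\le M_t(v)$ agents kept on $v$'' in the two main branches of Algorithm~\ref{algo:inf_decon}) number at most $k$ at any single round. I would then combine this with the accounting from Lemma~\ref{lem:full_decon}: along the active root-to-frontier path of the DFS at most $d$ agents are stationed (that path has length at most the eccentricity of $Home$, hence at most the diameter $d$), and at most $k$ further agents sit on present feedback edges acting as separators; together with the $\le k$ missing-edge guards this totals at most $d+2k$. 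Thus $d+2k$ co-located agents always leave at least one free group to continue the traversal, which is exactly what the recursion of Lemma~\ref{lem:full_decon} requires.

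For \textbf{Inv-1} itself I would redo the case analysis of the high-level description. If $C_t(v)=0$, up to $M_t(v)$ agents stay to guard the missing edges and, by 1-interval connectivity, the rest reroute out of $v$; whenever one of those edges reappears at a round $t'>t$ with $A_{t'}(v)=M_{t'}(v)$, the responsible guard crosses it, so $v$'s missing edges are decontaminated or explored as they return. If $C_t(v)\ge 1$, after parking the $\le M_t(v)$ missing-edge guards we split on the number of \emph{visible} contaminated ports: with none we reroute (\textbf{SubCase-1}); with exactly one the whole free group advances along it (\textbf{SubCase-2}); with more than one the lowest-ID free agent stays as a separator and the rest advance along the lowest contaminated port (\textbf{SubCase-3}), the stationary agent leaving only once SubCase-1 or SubCase-2 holds at $v$. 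Since each backtracking return to $v$ strictly decreases the number of unexplored contaminated ports of $v$ — as in Lemma~\ref{lem:full_decon}, with the extra observation that a contaminated edge incident to $v$ either reappears and is then handled, or stays absent forever, in which case it is permanently guarded and can never recontaminate — there is a round $t'>t$ with $C_{t'}(v)=0$. \textbf{Inv-2} then follows verbatim from the argument of Lemma~\ref{lem:full_visit}: an unvisited node $v$ has a contaminated edge to its first-visited neighbour $u$, so $C_t(v)\ge 1$ and \textbf{Inv-1} forces $v$ to be visited, a contradiction.

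The main obstacle I expect is the agent-count bookkeeping in the second paragraph: one must verify that the three classes of stationary agents — active-path separators ($\le d$), feedback-edge separators ($\le k$), and missing-edge guards ($\le k$) — are genuinely disjoint, and that rerouting around a missing tree edge does not spawn a fresh unbounded chain of guards. The clean way to see this is that a missing edge is either a feedback edge (already counted among the $k$ separators, so its guard replaces rather than adds) or a tree edge whose guard stands in, in the count, for the single on-path agent that the subtree beyond it would otherwise have required; and a reroute only revisits already-cleaned vertices or makes ordinary DFS progress, hence adds no guards beyond those already charged. Pinning this down rigorously, rather than the decontamination recursion (a direct adaptation of Lemmas~\ref{lem:full_decon} and~\ref{lem:full_visit}), is where the real work lies.
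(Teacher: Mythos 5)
Your overall route parallels the paper's own proof: fix a spanning tree and the $k$ feedback edges, observe that 1-interval connectivity keeps at most $k$ edges of $G$ absent at any round, run the DFS-with-guards argument through two invariants, and get monotonicity from the fact that any recontaminating path must cross a node that is still guarded. However, two concrete steps do not go through as you state them. First, your Inv-1 is too strong for the IDED model: you claim every first-visited node $v$ reaches a round $t'$ with $C_{t'}(v)=0$, but an adversary may keep a contaminated incident edge absent forever, in which case $v$ is never fully decontaminated. The paper deliberately weakens Inv-1 (Lemma \ref{lem:inf-station_decon}) to the guarding statement that at least one agent remains stationary at $v$ until $v$ is fully decontaminated, and adds an explicit remark that edges which never reappear cannot be decontaminated, so the theorem only promises decontamination of the vertices and of the edges that do exist/reappear. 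Your parenthetical ``permanently guarded, hence $C_{t'}(v)=0$'' conflates ``cannot cause recontamination'' with ``is decontaminated.''

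Second, the bookkeeping you yourself flag as ``where the real work lies'' is exactly where your budget diverges from what Algorithm \ref{algo:inf_decon} actually does. The algorithm parks up to $M_t(v)$ guards at \emph{every} visited endpoint of a missing edge, and nothing prevents both endpoints of the same missing edge from being guarded simultaneously (the graph is anonymous and agents at different nodes cannot coordinate), so up to $2k$ agents can be locked on missing edges at once. This is precisely how the paper spends the extra $2k$: in Lemma \ref{lem:inf-station_decon} it charges $2(k-\alpha)$ guards to the other missing edges plus $\alpha$ at $v$, leaving $d$ agents for the acyclic (spanning-tree) remainder. Your decomposition of $d$ path agents $+\,k$ feedback separators $+\,k$ missing-edge guards rests on the unproved claim that only one guard per missing edge is ever parked, and the charging argument you sketch to repair it (a missing tree edge's guard ``stands in'' for the on-path agent of the subtree beyond it) is asserted rather than established. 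As written, the count of $d+2k$ is not justified; you would need to either prove that charging lemma or adopt the paper's coarser budget of two guards per missing edge plus $d$ for the tree.
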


\begin{proof}
    Fix a spanning tree $T=(V,E')\subseteq G$ of the static graph $G$. The set of \emph{feedback} edges are $\Tilde{E}=E\setminus E'$, with $|\Tilde{E}|=\mu(G)=k$. These feedback edges are the only ones for which cycles may be introduced, or potential recontamination may occur. 

    As per our algorithm \textsc{Initial-Decontamination}, all $d+2k$ agents are initially placed at $Home$. At this point the collection of $\Sigma=V$. Next, we have the following invariants:

    \begin{enumerate}
        \item \textbf{Inv-1}: Whenever a node, say $v\in V$ is visited for the first time by an agent at time $t$, our algorithm ensures that from time $t$ onwards, at least one agent remains stationary at $v$ until $v$ is \emph{fully decontaminated}.

        \item \textbf{Inv-2}: Every node in $G$ is visited by at least one agent, during the execution of \textsc{Infinite-Decontamination}.
    \end{enumerate}

So, \textbf{Inv-1} (refer Lemma \ref{lem:inf-station_decon}) and \textbf{Inv-2} (refer Lemma \ref{lem:inf-each-visited}) implies that, all existing edges in $\mathcal{G}$ will be fully decontaminated.

A node $v$ which is decontaminated, will remain decontaminated, as there always exists at least one agent guarding $v$, until $v$ is fully decontaminated. This nullifies the possibility of recontamination, as any path of recontamination must pass through at least one decontaminated node. Hence, monotonicity of our algorithm holds.
\qed
\end{proof}

\begin{lemma}\label{lem:inf-station_decon}
    Whenever a node, say $v\in V$ is visited for the first time by an agent at time $t$, our algorithm ensures that from time $t$ onwards, at least one agent remains stationary at $v$ until $v$ is \emph{fully decontaminated}.
\end{lemma}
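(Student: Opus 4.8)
The plan is to establish the invariant by induction on the order in which vertices receive their first visit under the DFS traversal of Algorithm~\ref{algo:inf_decon}, mirroring the case analysis of Lemma~\ref{lem:full_decon} but now simultaneously tracking the contaminated ports $\{C_t(v)\}$ and the missing ports $\{M_t(v)\}$. The base case is $v=Home$: all $d+2k$ agents start there, so $Home$ is guarded, and by the structure of the algorithm no agent leaves $Home$ while an incident edge of $Home$ is still contaminated, so a guard persists until every adjacent edge of $Home$ has been traversed, which is exactly the moment $Home$ becomes fully decontaminated. For the inductive step, suppose the exploring party first reaches $v$ at time $t$ along its arrival edge $e$; since no other edge incident to $v$ can have been decontaminated yet (traversing any incident edge is itself a first visit to $v$), the set $\{C_t(v)\}\cup\{M_t(v)\}$ is precisely the collection of remaining incident ports, and I split according to the branches of Algorithm~\ref{algo:inf_decon}.

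In each branch I exhibit the stationary agent(s) and argue that one of them persists until full decontamination. If $C_t(v)=0$, then either $v$ is already fully decontaminated (so the claim is vacuous) or $M_t(v)\ge 1$, and the (at most) $M_t(v)$ agents left behind guard exactly the missing incident edges; by the rule $A_{t'}(v)=M_{t'}(v)$ such a guard advances only once its edge reappears, so as long as some incident edge is still missing — hence $v$ not fully decontaminated — at least one guard remains at $v$. If $C_t(v)\ge 1$, the missing-edge guards are kept, and among the visible contaminated ports I use the three sub-cases: with more than one visible contaminated port the lowest-ID agent stays as a contamination guard and, by construction, leaves $v$ only when the count of visible contaminated ports has dropped to at most one (SubCase-1 or SubCase-2); with exactly one visible contaminated port and $M_t(v)=0$ the party traverses it, and at that instant $v$ has no contaminated incident edge and is therefore fully decontaminated, so the interval in the statement closes legitimately; with $M_t(v)\ge 1$ the missing-edge guards carry the invariant. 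The supporting fact I would prove as a short sub-claim is that every contaminated incident edge of $v$ — visible or missing — is eventually traversed: by the exploring party on leaving $v$, by a backtracking agent that re-enters $v$ and departs through a still-contaminated port, or by a missing-edge guard once that edge reappears. Chaining this with the rules above shows the number of stationary agents demanded at $v$ is non-increasing along a suitable measure and reaches $0$ exactly when $C_t(v)=M_t(v)=0$, i.e. when $v$ is fully decontaminated. (If the adversary keeps an incident edge missing forever, $v$ is never fully decontaminated and a guard simply stays forever, so the invariant still holds; decontamination of such an edge then depends on the standing assumption, used in Theorem~\ref{d+2k result}, that every footprint edge reappears eventually.)

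The main obstacle is to show that the pool of $d+2k$ agents is always large enough for this guarding discipline to be executable, i.e. that at no round does the algorithm require a stationary agent it does not possess. Here I would bound, uniformly over the adversary's schedule, the number of agents simultaneously immobilised, arguing that the only two sources of immobilisation are (i) the current DFS path from $Home$, which contributes at most $d$ vertices each holding one guard, and (ii) the $k$ feedback edges, each of which can lock up at most two extra agents — one acting as the missing-edge guard at the endpoint where the feedback edge currently dangles as a contaminated missing port, and one acting as the contamination guard created at the endpoint from which that feedback edge is discovered as a \emph{second} contaminated port (the branch $|\{C_t(v)\}\cap(\{\delta_v\}\setminus\{M_t(v)\})|>1$). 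This is precisely where IDED costs a factor $2$ over the bound in Lemma~\ref{lem:full_decon}: under FTEA the finite reappearance time lets a single guard at a feedback endpoint discharge both roles in sequence, whereas under IDED the two roles can be active at once. Verifying that (i) and (ii) are exhaustive and that their total never exceeds $d+2k$ is the crux; once this counting is in place, the lemma follows, and together with Lemma~\ref{lem:inf-each-visited} it yields Theorem~\ref{d+2k result}.
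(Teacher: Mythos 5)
Your persistence argument (induction on the first-visit order, exhibiting the stationary guard in every branch of Algorithm~\ref{algo:inf_decon}) is sound and in fact more explicit than what the paper writes; but the real content of the lemma, in the paper's proof as well, is the agent-budget argument, and there your proposal has a genuine gap that you yourself flag: you state that verifying that your two sources of immobilisation are exhaustive and never exceed $d+2k$ is ``the crux,'' and you do not carry it out. Moreover, the accounting you sketch would not survive that verification as stated. You charge at most $d$ guards to the current DFS path and at most $2$ guards to each of the $k$ feedback edges of a fixed spanning tree, but under IDED the adversary may remove \emph{any} edges, including spanning-tree edges; a missing tree edge whose endpoints lie off the current DFS path also demands missing-edge guards, and these fall into neither of your buckets. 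The paper's counting is keyed to a different fact, which your write-up never invokes: by 1-interval connectivity the surviving graph must contain a spanning tree, so at most $k=|E|-(n-1)$ edges of $G$ can be missing \emph{simultaneously}; hence at any instant at most $2k$ agents are tied up as missing-edge guards (at most one per endpoint of each missing edge), and the remaining $d$ agents handle the acyclic remainder, with connectivity plus the DFS rule guaranteeing that the $\alpha$ guards needed at $v$ eventually arrive. Without this bound on simultaneously missing edges, the factor $2k$ in your bucket (ii) has no justification; with it, the per-feedback-edge charging should simply be replaced by per-currently-missing-edge charging, after which the exhaustiveness check you postponed becomes the whole proof.

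A smaller point: your closing parenthetical attributes to Theorem~\ref{d+2k result} a standing assumption that every footprint edge eventually reappears. The paper makes no such assumption; the remark following Lemma~\ref{lem:inf-each-visited} explicitly concedes that permanently missing edges cannot be decontaminated, and the lemma's guarantee in that situation is exactly what you describe (a guard waits at $v$ forever), so this caveat should be phrased as a limitation of the result rather than as an assumption of the model.
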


\begin{proof}
    To prove this lemma, we consider a vertex $v$, at which whenever an agent, say $a_1$, visits it, finds $C_t(v)>1$ and there exists $\alpha$ many missing edges adjacent to $v$. These missing edges can either be contaminated or decontaminated. In either case, as per our algorithm, whenever an agent finds a missing edge (may be contaminated or decontaminated) with no agent waiting for it, then the agent waits for it. So, since at any time at most $k$ such missing edge can exist so, and among them $\alpha$ are adjacent to $v$. So, inevitably for remaining $k-\alpha$ missing edges, $2(k-\alpha)$ agents will act as a guard. Moroever, the graph $\mathcal{G}$ being connected, with the agents employing a strategy similar to DFS, guarantees that $\alpha$ many agents inevitably reach $v$ for those $\alpha$ missing edges. Again, since the remaining graph is acyclic (as all cycles can be formed using these feedback edges), the remaining $d$ agents are sufficient to decontaminate it. Hence, this proves that, until the node $v$ is fully decontaminated, at least one agent remains as a guard, waiting for its full decontamination.\qed\end{proof}

\begin{lemma}\label{lem:inf-each-visited}
    Every node in $G$ is visited by at least one agent, during the execution of \textsc{Infinite-Decontamination}.
\end{lemma}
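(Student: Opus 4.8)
The plan is to mirror the structure of the proof of Lemma~\ref{lem:full_visit}, adapting it to the \textsc{Infinite-Decontamination} strategy by routing the argument through Lemma~\ref{lem:inf-station_decon}. I would argue by contradiction: suppose some node $v \in V$ is never visited by any agent during the execution of \textsc{Infinite-Decontamination}. Since $G$ is connected, $v$ has at least one neighbor; among all neighbors of $v$, consider one, call it $u$, that \emph{is} visited (such a neighbor exists, since otherwise an entire connected component would be unvisited, contradicting that $Home$ is visited and $G$ is connected — a short sub-argument propagating visitedness along a path from $Home$ to $v$). Let $t$ be the first round at which an agent occupies $u$.

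The next step is to observe that the edge $(u,v)$ must be contaminated at every round from $t$ onward: if an agent ever traversed $(u,v)$ it would visit $v$, contradicting our assumption; and $(u,v)$ can never be decontaminated without being traversed (in the decontamination model, an edge becomes clean only when an agent traverses it, and here no endpoint sequence can clean it since $v$ is never occupied). Hence $(u,v) \in \{C_{t'}(u)\}$ for all $t' \ge t$, which in particular gives $C_t(u) \ge 1$. Now I would invoke Lemma~\ref{lem:inf-station_decon}: since $u$ is first visited at round $t$ with $C_t(u) \ge 1$, the algorithm guarantees that $u$ is eventually \emph{fully decontaminated} at some round $t' > t$, meaning all adjacent edges of $u$ — including $(u,v)$ — are decontaminated by round $t'$. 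But decontaminating $(u,v)$ requires an agent to traverse it (possibly after waiting for it to reappear if it was among the missing edges $\{M_t(u)\}$, per the ``whenever the missing edges re-appear'' step of Algorithm~\ref{algo:inf_decon}), and any such traversal delivers an agent onto $v$. This contradicts the assumption that $v$ is never visited.

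One subtlety I would need to address carefully is the case $C_t(u) = 1$ versus $C_t(u) > 1$ in Lemma~\ref{lem:inf-station_decon}: the statement of that lemma is phrased for $C_t(v) > 1$, but its conclusion (``until $v$ is fully decontaminated'') should still be read as applying in the $C_t(u)=1$ case as well, where a single agent simply traverses the unique contaminated edge; I would state this explicitly so the invocation is clean. I expect the main obstacle to be making rigorous the claim that ``fully decontaminated $u$'' forces the specific edge $(u,v)$ to have been \emph{traversed} (and hence $v$ visited), rather than merely being declared clean by some other mechanism — this is where the precise semantics of the decontamination model and the missing-edge handling in Algorithm~\ref{algo:inf_decon} must be pinned down, in particular ruling out that $(u,v)$ stays perpetually missing (it cannot, since an indefinitely missing feedback edge would eventually violate the requirement that the graph be cleaned, and the guarding agents on $\{M_{t}(u)\}$ are precisely the ones who traverse it upon reappearance).
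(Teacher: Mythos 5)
Your argument transplants the proof of Lemma~\ref{lem:full_visit} into the IDED setting, but that is not the route the paper takes for Lemma~\ref{lem:inf-each-visited}, and the transfer breaks exactly where the FTEA assumption was doing the work. The paper's proof does not argue through eventual full decontamination of a visited neighbour $u$ at all: it argues that the underlying DFS pushes agents toward $v$ along some $Home$--$v$ path, that agents blocked by missing edges are left behind as guards (Lemma~\ref{lem:inf-station_decon} is cited only to say enough agents are available for this), and that the remaining agents keep making forward progress along alternative routes, which must exist at every round by 1-interval connectivity; hence $v$ is reached in finite time.

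The concrete gap in your version is the step ``Lemma~\ref{lem:inf-station_decon} guarantees that $u$ is eventually fully decontaminated at some round $t'>t$.'' That lemma states a guarding invariant --- an agent stays at $u$ \emph{until} $u$ is fully decontaminated --- it does not assert that full decontamination of $u$ ever occurs. And in the IDED model it genuinely may not: the adversary can keep the edge $(u,v)$ disappeared forever while preserving connectivity of $v$ through its other incident edges, in which case $(u,v)$ is never traversed, $u$ is never fully decontaminated, and your contradiction never materialises; the paper concedes exactly this in the remark following the lemma (``the adversary may not reappear some edges \dots it is impossible to decontaminate those edges''). Your attempt to exclude the perpetually-missing case is circular: you argue $(u,v)$ cannot stay missing because that ``would violate the requirement that the graph be cleaned,'' but the algorithm's correctness is what is being proved, and indefinite disappearance is precisely what IDED permits. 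The neighbour-edge contradiction works for Lemma~\ref{lem:full_visit} only because FTEA bounds reappearance time by $T$, so Lemma~\ref{lem:full_decon} can exhibit an actual round at which the contamination degree drops to zero; in IDED the argument must instead go through connectivity-forced alternate paths and forward motion, as the paper's proof does.
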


\begin{proof}
    Let us consider a node $v \in V$. Say $\exists\ \alpha$ many paths from $Home$ to $v$. Our Algorithm \ref{algo:inf_decon} (underlying DFS) ensures that agents start proceeding towards $v$ through one of these paths. When blocked, we place (sufficient, from Lemma \ref{lem:inf-station_decon}) agents which move forward sometime after the edges reappear. Meanwhile, rest of the agents try to proceed through the other paths and follow the same rule there. 

    Since forward motion of the agents are ensured, we can guarantee that the agents reach $v$ in finite time starting from $Home$.\qed\end{proof}

\begin{remark}
Our algorithm guarantees that all the vertices are decontaminated, but it did not guarantee all the edges to be decontaminated, because the adversary may not reappear some edges, after the agents guard those adjacent nodes. So, it is impossible to decontaminate those edges.    
\end{remark}

\begin{lemma}\label{lemma: InfTime}
Algorithm \textsc{Infinite-Decontamination} ensures that each node in the underlying dynamic graph $\mathcal{G}$ is decontaminated within $O(n^2)$ rounds.
\end{lemma}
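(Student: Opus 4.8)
The plan is to bound the number of rounds consumed by the DFS traversal plus the waiting delays incurred at missing edges, and to show these together give an $O(n^2)$ bound. First I would recall from Lemmas~\ref{lem:inf-station_decon} and~\ref{lem:inf-each-visited} that every node of $G$ is eventually visited and that once a node $v$ is visited it stays guarded until fully decontaminated; hence it suffices to bound the time until the \emph{last} node is first reached and then the additional time until that node (and all others) become fully decontaminated. The key observation is that the underlying traversal is a DFS on the footprint $G$, so in a static graph it would finish in $O(|E|)$ moves; since $G$ is connected with cyclomatic number $k$, $|E| = n-1+k$, and recalling we are in the regime where the bound is meaningful ($k$ small, certainly $k = O(n)$), this is $O(n)$ in the relevant case, but I will instead bound it crudely by noting each of the $n$ vertices is left via at most $\delta_v$ ports, and the DFS visits each footprint edge a bounded number of times, giving $O(|E|) = O(n^2)$ moves in the worst case even without assumptions on $k$.

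Next I would account for the dynamic delays. The only source of extra delay is an agent waiting at a node $v$ for a missing edge to reappear, or (SubCase-3) the lowest-ID agent $a_1$ waiting at $v$ until the remaining contaminated incident edges are handled. Crucially, under 1-interval connectivity the free agents never wait: whenever the DFS is blocked at $v$ by missing edges, the remaining (non-guarding) agents immediately find an alternate connected route and continue, so forward progress of the ``exploration front'' is never stalled. Thus the time to first-visit every node is governed purely by the static DFS move count, $O(n^2)$ rounds. For full decontamination: a guard at $v$ is released once all but one of its incident contaminated edges have been cleaned by returning agents; since each such edge is cleaned at most once and there are $O(|E|) = O(n^2)$ edges total, the aggregate additional waiting across all vertices is also $O(n^2)$. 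Combining, every node becomes fully decontaminated — hence decontaminated — within $O(n^2)$ rounds.

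The main obstacle I anticipate is making the ``agents never wait for forward progress'' claim rigorous in the IDED model, where a missing edge may never reappear: I must argue that the relevant quantity is the time until each \emph{vertex} is decontaminated (as the remark preceding this lemma already concedes that some edges may never be cleaned), and that vertex decontamination only requires that the vertex be visited and then all-but-its-last incident edge be swept — each of which is achievable via alternate routes guaranteed by connectivity, without ever blocking on a never-returning edge. A secondary subtlety is bounding how many times the DFS can re-enter an already-visited vertex when rerouting around missing edges; I would handle this by charging each re-entry to a distinct footprint edge traversal and using $|E| = O(n^2)$, so the total is still $O(n^2)$. I would keep the argument at the level of counting footprint-edge traversals rather than tracking the precise DFS stack, since the constant is irrelevant for the stated asymptotic bound.
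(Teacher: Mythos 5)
There is a genuine gap in your central claim that ``under 1-interval connectivity the free agents never wait \ldots so forward progress of the exploration front is never stalled,'' and hence that the time to first-visit every node is governed purely by the static DFS move count with each footprint edge traversed a bounded number of times. In the IDED model this is exactly what the adversary can violate: when the agents try to advance inside a cycle $C$ of length $l$, the adversary can make the next edge disappear indefinitely, forcing the non-guarding agents to reroute the long way around $C$; it can then make an edge on that alternate route disappear, which (to preserve 1-interval connectivity) forces the earlier edge to reappear, and so on. The agents are thus driven to and fro, re-traversing the \emph{same} footprint edges many times -- up to $\Theta(l)$ traversals per newly visited node -- so your bookkeeping device of ``charging each re-entry to a distinct footprint edge traversal'' does not work: the number of re-traversals of a single edge is not $O(1)$, and the total move count is not $O(|E|)$ in the sense you argue (in a single cycle with $|E|=n$ the adversary forces $\Theta(n^2)$ moves, not $O(n)$). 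Your final figure of $O(n^2)$ happens to coincide with the correct bound only because you also invoked the crude estimate $|E|\le n^2$, but the justification behind it is unsound.

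The paper's proof is built precisely around this adversarial to-and-fro: it argues that each time the adversary blocks the alternate route, connectivity forces the previously missing edge of the cycle to reappear, so the stationary agent left behind advances and at least one new node is reached within $O(l)$ rounds; summing over the $l$ nodes of the cycle gives $O(l^2)$ rounds per cycle and, since $l\le n$, $O(n^2)$ overall. To repair your argument you would need to replace the ``never stalled / $O(1)$ traversals per edge'' claim with an amortized argument of this kind, showing that every $O(n)$ rounds the set of visited (hence decontaminated) vertices strictly grows, which is the essential content of the paper's proof.
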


\begin{proof}
    Consider a cycle $C$ of length $l$ inside $G$ (where $G$ is the initial static graph of $\mathcal{G}$). Now, whenever the agents visit a node in $C$, they either perform clockwise or counter-clockwise movement to visit the remaining nodes of $C$. Now, after each movement, the adversary can disrupt the agents from visiting further by disappearing an edge. So, in this phenomenon, our algorithm instructs an agent among these group of agents to remain stationary, whereas the other agents move in the opposite direction to find an alternate route. Again, the adversary may disappear an edge along the opposite direction movement of the agents, this disappearance inturn reappears the earlier edge (to maintain connectivity in $C$) which helps the earlier stationary agent to move forward, and visit a new node.
    But they might not be able to move further forward because the new vertex might have \emph{contamination degree} more than the previous one and hence requires more agents to explore. This to and fro movement may continue. So, in order to visit each node in $C$, it takes $O(l^2)$ time for the agents, executing \textsc{Infinite-Decontamination}. Now, since $l\le n$, we can conclude that in the worst case to visit every node in $\mathcal{G}$, it takes $O(n^2)$ time.
\qed\end{proof}

% The proof of Lemmas \ref{lemma: InfTime} are in Appendix \ref{Appendix:Infinite-Decont}.

\section{Conclusion}

In this paper, we study the monotone decontamination on two models of dynamicity \textit{finite time edge appearance} (i.e., FTEA) and \textit{indefinite edge disappearance} (i.e., IDED) models. The FTEA model is a special case of IDED, accordingly the bounds obtained by us reflects the added complexities for the IDED model. In all our algorithms as well as lower bounds, we assume the underlying graphs to be arbitrary and the agents have no initial knowledge about it. Accordingly, the bounds obtained in this paper are among the first results to be obtained for network decontamination in dynamic networks. There are many directions of future work that can be explored. Among them, the most important fact is that, our results are not tight, so an important future work must be to obtain tight bounds. Another possible direction is to explore different other models of dynamicity, in each of these models this problem can be investigated.

% \textcolor{blue}{
% \begin{enumerate}
%     \item Highlight the 2 models
%     \item Give the results (based on size of conclusion)
%     \item Highlight FDED is special case of IDED, implies bounds hold
%     \item In all the algorithms intially we have sufficient number of agents. But when we know the values of d,k,n we can provide enough agents to decontaminate the graph
%     \item Agents have absolutely no information about the graph
%     \item The time during which an edge disappears is equivalent to the agent’s waiting time
% \end{enumerate}
% }
% \textcolor{blue}{
% \begin{enumerate}
%     \item Gaps in upper and lower bound. Future works can give optimal results.
%     \item Many models of dynamicity exists, one can work with other models
% \end{enumerate}
% }

\bibliographystyle{splncs04}
\bibliography{mybib}

% \newpage

% \appendix
% \section{Appendix}

% \subsection{Psuedo Code and Analysis of \textsc{Uni-Decontamination}}\label{Appendix:Uni-Decon}

% \TrivUB*

% \newpage
% \subsection{Example of execution of \textsc{Modified-Decontamination}}\label{appendix: example}

% \subsection{Psuedo Code and Analysis of \textsc{Modified-Decontamination}}\label{Appendix:Modified-Decon}

%\ModMain*

% \ModInvA*

% \ModInvB*

% \subsection{Psuedo Code and Analysis of \textsc{Infinite-Decontamination}}\label{Appendix:Infinite-Decont}

% \InfMain*

% \InfInvA*

% \InfInvB*

% \InfTime*

\end{document}